\algrenewcommand\algorithmicensure{\textbf{Output:}}
\algrenewcommand\algorithmicrequire{\textbf{Input:}}
\numberwithin{equation}{section}
\DeclareMathOperator{\KL}{KL}
\newcommand{\KLM}{\KL_{\rm min}}
\newcommand{\kl}[2]{\KL\left(#1\,\middle\|\,#2\right)}
\newcommand{\klm}[3]{\KL_{\rm min}^{#1}\left(#2\,\middle\|\,#3\right)}
\newcommand{\skl}[3]{\KL^*_{#1}\left(#2\,\middle\|\,#3\right)}
\DeclareMathOperator{\Ent}{H}
\newcommand{\ent}[1]{\Ent\left(#1\right)}
\newcommand{\sent}[2]{\Ent^*_{#1}\left(#2\right)}
\DeclareMathOperator{\Sim}{\mathsf{Sim}}
\def\mydefbb#1{\expandafter\def\csname t#1\endcsname{\widetilde{#1}}}
\def\mydefallbb#1{\ifx#1\mydefallbb\else\mydefbb#1\expandafter\mydefallbb\fi}
\def\mydefbb#1{\expandafter\def\csname h#1\endcsname{\widehat{#1}}}
\def\mydefallbb#1{\ifx#1\mydefallbb\else\mydefbb#1\expandafter\mydefallbb\fi}
\def\mydefbb#1{\expandafter\def\csname bo#1\endcsname{\overline{#1}}}
\def\mydefallbb#1{\ifx#1\mydefallbb\else\mydefbb#1\expandafter\mydefallbb\fi}
\newcommand{\tG}{\mathsf{\widetilde{G}}}
\newcommand{\w}{\mathsf{w}}
\newcommand{\klgap}{\Delta}
\newcommand{\klfail}{\delta}
\newcommand{\invp}{\eps}
\newcommand{\PPT}{\textsc{ppt}}
\newcommand\email[1]{\href{mailto:#1}{{\ttfamily #1}}}
\title{Unifying computational entropies\\via Kullback--Leibler divergence}
\author{
	Rohit Agrawal
	\thanks{Harvard John A. Paulson School of Engineering and Applied
		Sciences. Supported by the Department of Defense (DoD) through the
		National Defense Science \& Engineering Graduate Fellowship (NDSEG)
		Program.}\\
	\email{rohitagr@seas.harvard.edu}
	\and
	Yi-Hsiu Chen
	\thanks{Harvard John A. Paulson School of Engineering and Applied
		Sciences. Supported by NSF grant CCF-1763299.}\\
	\email{yhchen@seas.harvard.edu}
	\and
	Thibaut Horel
	\thanks{Harvard John A. Paulson School of Engineering and Applied Sciences.
		Supported in  part  by  the  National  Science Foundation  under grants
		CAREER  IIS-1149662, CNS-1237235 and CCF-1763299, by the Office of
		Naval Research under grants YIP N00014-14-1-0485 and N00014-17-1-2131,
		and by a Google Research Award.}\\
	\email{thorel@seas.harvard.edu}
	\and
	Salil Vadhan
	\thanks{Harvard John A. Paulson School of Engineering and Applied
		Sciences. Supported by NSF grant CCF-1763299.}\\
	\email{salil\_vadhan@harvard.edu}
}
\begin{document}

\maketitle

\begin{abstract}
We introduce {\em hardness in relative entropy}, a new notion of hardness for
search problems which on the one hand is satisfied by all one-way functions and
on the other hand implies both \emph{next-block pseudoentropy} and
\emph{inaccessible entropy}, two forms of computational entropy used in recent
constructions of pseudorandom generators and statistically hiding commitment
schemes, respectively.  Thus, hardness in relative entropy unifies the latter
two notions of computational entropy and sheds light on the apparent
``duality'' between them.  Additionally, it yields a more modular and
illuminating proof that one-way functions imply next-block inaccessible
entropy, similar in structure to the proof that one-way functions imply
next-block pseudoentropy (Vadhan and Zheng, STOC `12).
\end{abstract}

\vfill
\textbf{Keywords:} one-way function, pseudorandom generator, pseudoentropy, computational entropy, inaccessible entropy, statistically hiding commitment, next-bit pseudoentropy.

\newpage
\section{Introduction}

\subsection{One-way functions and computational entropy}

One-way functions~\cite{DiffieHe76} are on one hand the minimal assumption for
complexity-based cryptography~\cite{ImpagliazzoLu89}, but on the other hand can
be used to construct a remarkable array of cryptographic primitives, including
such powerful objects as CCA-secure symmetric encryption, zero-knowledge proofs
and statistical zero-knowledge arguments for all of $\NP$, and secure
multiparty computation with an honest
majority~\cite{GoldreichGoMi86,GoldreichMiWi91,GoldreichMiWi87,HastadImLeLu99,Rompel90,Naor91,HaitnerNgOnReVa09}.
All of these constructions begin by converting the ``raw hardness'' of
a one-way function (OWF) to one of the following more structured cryptographic
primitives: a pseudorandom generator (PRG)~\cite{BlumM82,Yao82B}, a universal
one-way hash function (UOWHF)~\cite{NaorYu89}, or a statistically hiding
commitment scheme (SHC)~\cite{BrassardChCr88}.

The original constructions of these three primitives from
arbitrary one-way functions~\cite{HastadImLeLu99,Rompel90,HaitnerNgOnReVa09}
were all very complicated and inefficient.  Over the past decade, there has
been a series of simplifications and efficiency improvements to these
constructions~\cite{HaitnerReVaWe09,HRV13,HHRVW10,VZ12},
leading to a situation where the constructions of two of these primitives ---
PRGs and SHCs --- share a very similar structure and seem ``dual'' to each other.
Specifically, these constructions proceed as follows:
\begin{enumerate}
	\item Show that every OWF $f : \zo^n\rightarrow \zo^n$ has a gap between
		its ``real entropy'' and an appropriate form of ``computational
		entropy''.  Specifically, for constructing PRGs, it is shown that the
		function $\sG(x)=(f(x),x_1,x_2,\ldots,x_n)$ has ``next-block
		pseudoentropy'' at least $n+\omega(\log n)$ while its real entropy is
		$\ent{\sG(U_n)} = n$~\cite{VZ12} where $\ent{\cdot}$ denotes Shannon
		entropy. For constructing SHCs, it is shown that the function $\sG(x)
		= (f(x)_1,\ldots,f(x)_n,x)$ has ``next-block accessible entropy'' at
		most $n-\omega(\log n)$ while its real entropy is again
		$\ent{\sG(U_n)}=n$~\cite{HaitnerReVaWe09}.  Note that the differences
		between the two cases are whether we break $x$ or $f(x)$ into
		individual bits (which matters because the ``next-block'' notions of
		computational entropy depend on the block structure) and  whether the
		form of computational entropy is larger or smaller than the real
		entropy.

	\item An ``entropy equalization'' step that converts $\sG$ into a similar
		generator where the real entropy in each block conditioned on the
		prefix before it is known.  This step is exactly the same in both
		constructions.

	\item A ``flattening'' step that converts the (real and computational)
		Shannon entropy guarantees of the generator into ones on (smoothed)
		min-entropy and max-entropy.  This step is again exactly the same in
		both constructions.

	\item A ``hashing'' step where high (real or computational) min-entropy is
		converted to uniform (pseudo)randomness and low (real or computational)
		max-entropy is converted to a small-support or disjointness property.
		For PRGs, this step only requires randomness
		extractors~\cite{HastadImLeLu99,NisanZu96}, while for SHCs it requires
		(information-theoretic) interactive
		hashing~\cite{NaorOsVeYu98,DingHRS04}.  (Constructing full-fledged SHCs
		in this step also utilizes UOWHFs, which can be constructed from
		one-way functions~\cite{Rompel90}.  Without UOWHFs, we obtain a weaker
		binding property, which nevertheless suffices for constructing
		statistical zero-knowledge arguments for all of $\NP$.)
\end{enumerate}
This common construction template came about through a back-and-forth exchange of ideas between the two lines of work.  Indeed, the
uses of computational entropy notions, flattening, and hashing originate with PRGs~\cite{HastadImLeLu99}, whereas the ideas of using next-block notions, obtaining them from breaking $(f(x),x)$ into short blocks, and entropy equalization originate with SHCs~\cite{HaitnerReVaWe09}.  All this leads to a feeling that the two constructions, and their underlying computational entropy notions, are ``dual'' to each other and should be connected at a formal level.

In this paper, we make progress on this project of unifying the notions of
computational entropy, by introducing a new computational entropy notion that
yields both next-block pseudoentropy and next-block accessible entropy in
a clean and modular fashion. It is inspired by the proof of \cite{VZ12} that
$(f(x),x_1,\ldots,x_n)$ has next-block pseudoentropy $n+\omega(\log n)$, which
we will describe now.

\subsection{Next-block pseudoentropy via relative pseudoentropy}
\label{sec:nbpe}

We recall the definition of next-block pseudoentropy, and the result of
\cite{VZ12} relating it to one-wayness.

\begin{definition}[next-block pseudoentropy, informal]\label{defn:nbpe-intro}
Let $n$ be a security parameter, and $X=(X_1,\ldots,X_m)$ be a random variable distributed on strings of length $\poly(n)$.
We say that $X$ has {\em next-block pseudoentropy} at least $k$ if there is a random variable $Z=(Z_1,\ldots,Z_m)$, jointly
distributed with $X$, such that:
\begin{enumerate}
\item For all $i=1,\ldots,m$, $(X_1,\ldots,X_{i-1},X_i)$ is computationally indistinguishable from $(X_1,\ldots,X_{i-1},Z_i)$.
\item $\sum_{i=1}^m \ent{Z_i|X_1,\ldots,X_{i-1}} \geq k$.
\end{enumerate}
Equivalently, for $I$ uniformly distributed in $[m]$, $X_I$ has \emph{conditional pseudoentropy} at least $k/m$ given
$(X_1,\ldots,X_{i-1})$.
\end{definition}

It was conjectured in \cite{HaitnerReVa10} that next-block pseudoentropy could be obtained from any OWF by breaking its input into bits, and this conjecture was proven in \cite{VZ12}:

\begin{theorem}[\cite{VZ12}, informal] \label{thm:VZ-intro}
Let $f : \zo^n\rightarrow \zo^n$ be a one-way function, let $X$ be uniformly
distributed in $\zo^n$, and let $X=(X_1,\ldots,X_m)$ be a partition of $X$ into
blocks of length $O(\log n)$.  Then $(f(X),X_1,\ldots,X_m)$ has next-block
pseudoentropy at least $n+\omega(\log n)$.
\end{theorem}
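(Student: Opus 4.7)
The plan is to follow the Vadhan--Zheng strategy: construct a next-block witness $Z = (Z_0, Z_1, \ldots, Z_m)$ jointly distributed with $Y = f(X)$ and the blocks $X_1, \ldots, X_m$, such that (a) each $Z_i$ is computationally indistinguishable from the true next block given the prefix, and (b) $\ent{Z_0} + \sum_{i=1}^{m} \ent{Z_i \mid Y, X_1, \ldots, X_{i-1}} \geq n + \omega(\log n)$. The ``budget'' to beat is $n$, since the true Shannon entropy of $(Y, X_1, \ldots, X_m)$ equals $\ent{X} = n$, because $Y$ is determined by $X$.

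The core idea is to trade one-wayness for a KL-divergence gap. For each prefix $v = (y, x_1, \ldots, x_{i-1})$, let $P_i(\cdot \mid v)$ denote the true conditional distribution of $X_i$. I would argue that no efficient sampler $\Sim$ can produce, given $v$, outputs whose expected KL divergence to $P_i(\cdot \mid v)$, summed over $i$, is $o(\log n)$: chaining such a $\Sim$ block by block would yield an efficient algorithm approximately sampling from $X \mid f(X) = y$ in KL, which in particular would invert $f$ with non-negligible probability, contradicting one-wayness. Conversely, the ``best'' efficient $\Sim$ defines an approximation whose per-block output distributions can be used in the role of $Z_i$, and the aforementioned KL deficit translates to additional entropy in the $Z_i$'s relative to the true blocks.

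Making this quantitative, I would use the chain rule for KL divergence along with the standard identity relating KL divergence, Shannon entropy, and cross-entropy to convert the $\omega(\log n)$ KL deficit into an entropy surplus of the $Z_i$'s. The restriction to blocks of length $O(\log n)$ enters crucially: a distinguisher on a short block can be exhaustively converted into an explicit per-block sampler with only polynomial blowup, so computational indistinguishability of $Z_i$ from $X_i$ given the prefix is essentially equivalent to an appropriate notion of KL-sampler hardness on that block.

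The main obstacle is the step linking inversion hardness to per-block sampler hardness in KL divergence. One-wayness is a single-task statement (``no efficient $A$ inverts $f$ on a random input with noticeable probability''), whereas I need a distributional statement across many next-block subtasks, each required to hold against an arbitrary distinguisher. The technical heart of the VZ argument---carefully balancing the block length, the success probability of the inverter derived from a would-be good sampler, and the chain-rule bookkeeping needed to sum the per-block slack to $\omega(\log n)$---is where I expect the proof to be most delicate, and where the paper's new ``hardness in relative entropy'' abstraction should buy a cleaner reduction.
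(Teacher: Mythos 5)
Your high-level architecture is the same as the paper's: (i) one-wayness gives KL-sampling hardness of $X$ given $f(X)$ (Lemma~\ref{lem:OWF-KL-hard-to-sample-intro}), (ii) the chain rule splits this into per-block hardness (Lemma~\ref{lem:OWF-nb-KL-hard-to-sample-intro}), and (iii) for blocks of length $O(\log n)$, KL-sampling hardness is converted into a conditional pseudoentropy gap (Lemma~\ref{lem:VZ-characterizing-intro}). The genuine gap is at step (iii), exactly where you write that the ``best'' efficient sampler's per-block output distributions ``can be used in the role of $Z_i$'' and that the KL deficit ``translates to additional entropy in the $Z_i$'s.'' This construction does not work: the hypothesis is that \emph{every} efficient sampler is far in KL from the true conditional distribution of $X_i$ given the prefix, so such a sampler's output is a poor approximation of $X_i$; nothing makes it computationally indistinguishable from $X_i$ given the prefix, and nothing gives it conditional entropy exceeding $\ent{X_i\,|\,Y,X_{<i}}$ by the deficit (also, there is no single ``best'' sampler over an asymptotic efficiency class). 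The actual content of Lemma~\ref{lem:VZ-characterizing-intro} is proved in the contrapositive via a min-max/duality argument: if no $Z_i$ with conditional entropy $\ent{X_i\,|\,Y,X_{<i}}+\delta$ is indistinguishable from $X_i$, then by the uniform min-max theorem of \cite{VZ12} there is a \emph{single} efficient distinguisher working against \emph{all} such candidate $Z_i$ simultaneously, and because the block ranges over only $2^{O(\log n)}=\poly(n)$ values, that distinguisher can be converted into an efficient sampler (reweighting the candidate distribution by an exponential of the distinguisher's output and normalizing by exhaustive enumeration) whose KL divergence from the true conditional is below $\delta$, contradicting (ii). You do mention the distinguisher-to-sampler conversion for short blocks, but without the min-max step and the KL accounting it cannot be replaced by the ``take the best sampler's output'' shortcut; note also that the $Z_i$ produced by the real argument need not be efficiently samplable at all.

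Two smaller points. First, you place the main difficulty at the step from inversion hardness to per-block KL-sampler hardness, but that is the easy part: a sampler achieving total divergence $\klgap$ inverts $f$ with probability at least $2^{-\klgap}$ by Jensen's inequality, and the per-block decomposition is just the chain rule for relative entropy; the delicate step is Lemma~\ref{lem:VZ-characterizing-intro}. Second, your quantitative claim is too weak: to conclude a pseudoentropy surplus of $\omega(\log n)$ you must show every efficient sampler has summed KL at least $\log(1/\eps)=\omega(\log n)$ (for negligible inversion probability $\eps$), not merely rule out samplers achieving $o(\log n)$.
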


The intuition behind Theorem~\ref{thm:VZ-intro} is that since $X$ is hard to sample given $f(X)$, then it
should have some extra computational entropy given $f(X)$.  This intuition is formalized using the following notion of ``relative pseudoentropy,'' which is a renaming of \cite{VZ12}'s notion of ``KL-hard for sampling,''
to better unify the terminology with the notions introduced in this work.

\begin{definition}[relative pseudoentropy] \label{def:KL-hard-to-sample}
Let $n$ be a security parameter, and $(X,Y)$ be a pair of random variables, jointly distributed over strings of length
$\poly(n)$.  We say that $X$ has {\em relative pseudoentropy at least $\klgap$} given $Y$ if for all probabilistic polynomial-time
$\sS$, we have
$$\kl{X,Y}{\sS(Y),Y} \geq \klgap,$$
where $\kl{\cdot}{\cdot}$ denotes the relative entropy (a.k.a.\ Kullback--Leibler divergence).\footnote{Recall that for random variables $A$ and $B$ with
	$\Supp(A)\subseteq\Supp(B)$, the
	relative entropy is defined by $\kl{A}{B}=\ex[a\from
A]{\log(\pr{A=a}/\pr{B=a})}$.}
\end{definition}
That is, it is hard for any efficient adversary $\sS$ to sample the conditional distribution of $X$ given $Y$, even approximately.

The first step of the proof of Theorem~\ref{thm:VZ-intro} is to show that
one-wayness implies relative pseudoentropy (which can be done with a one-line
calculation):
\begin{lemma} \label{lem:OWF-KL-hard-to-sample-intro}
	Let $f : \zo^n\rightarrow \zo^n$ be a one-way function and let $X$ be
	uniformly distributed in $\zo^n$.   Then $X$ has relative pseudoentropy at
	least $\omega(\log n)$ given $f(X)$.
\end{lemma}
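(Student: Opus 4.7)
The plan is to unfold $\kl{X,Y}{\sS(Y),Y}$ via the chain rule for KL divergence, lower bound the resulting conditional divergences using the data processing inequality applied to the event ``output is a valid preimage,'' and then combine Jensen's inequality with one-wayness to conclude.

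Fix an arbitrary \PPT{} sampler $\sS$ and write $Y = f(X)$. The joint distributions $(X,Y)$ and $(\sS(Y),Y)$ share the same $Y$-marginal, so the chain rule collapses to
$$\kl{X,Y}{\sS(Y),Y} = \ex[y\from Y]{\kl{X \mid Y = y}{\sS(y)}},$$
where $X \mid Y = y$ is uniform on $f^{-1}(y)$. Letting $p_y := \pr{\sS(y) \in f^{-1}(y)}$, the event ``output lies in $f^{-1}(y)$'' has probability $1$ under $X \mid Y = y$ and probability $p_y$ under $\sS(y)$; the data processing inequality applied to the indicator of this event then yields $\kl{X \mid Y = y}{\sS(y)} \geq \log(1/p_y)$.

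Taking the expectation over $y \from Y$ and applying Jensen's inequality (convexity of $-\log$) gives
$$\kl{X,Y}{\sS(Y),Y} \geq \log(1/\suc),$$
where $\suc := \ex[y\from Y]{p_y} = \pr{f(\sS(f(X))) = f(X)}$ is exactly the inversion probability of $\sS$ on $f$. By one-wayness, $\suc = n^{-\omega(1)}$, hence $\log(1/\suc) = \omega(\log n)$, as required. I do not anticipate a substantive obstacle---as the authors remark, this is essentially a one-line computation. The only non-mechanical choice is to apply data processing through the indicator of the preimage event, which is what turns a conditional KL divergence into the $-\log$ of precisely the quantity that one-wayness controls; everything else is the chain rule and Jensen.
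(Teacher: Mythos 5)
Your proof is correct, and it reaches exactly the paper's quantitative conclusion ($\KL \geq \log(1/\suc)$, hence $\omega(\log n)$ by one-wayness), but by a somewhat different decomposition than the paper uses. The paper does not prove this lemma directly: it derives it as a corollary of the more general Lemma~\ref{lem:kl-hard} / Theorem~\ref{thm:kl-hard}, where for an arbitrary pair $(\tG,\sS)$ the success probability of the inverter $\sA(y)=\tG_\w(\sS(y))$ is rewritten in one computation as $\ex[\substack{r\from \tR\\ y\from \tG_1(r)}]{2^{-\skl{r,y}{\tR,\tG_1(\tR)}{\sS(Y),Y}}}$ and then lower bounded by $2^{-\KL}$ via Jensen applied to the convex map $x\mapsto 2^{-x}$; the present lemma follows by fixing $\tG$ to the honest sampler $x\mapsto(f(x),x)$. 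You instead fix the honest sampler from the start, use the chain rule to reduce to the conditional divergences $\kl{X\mid Y=y}{\sS(y)}$, apply data processing to the indicator of the preimage event to get $\log(1/p_y)$ pointwise, and then apply Jensen to $-\log$ over $y$. Both arguments are two lines and give the same bound $\klgap=\log(1/\eps)$; your per-instance data-processing step is arguably more transparent about \emph{why} the bound holds (the simulator must place mass $p_y$ on $f^{-1}(y)$). What the paper's exponential-moment formulation buys is generality: because it bounds the success probability by an expectation of $2^{-\skl{\cdot}{\cdot}{\cdot}}$ over samples, the same computation yields the sample-wise ($\klfail$-min relative entropy) version via Markov's inequality and extends verbatim to adversarially chosen generators $\tG$ (hardness in relative entropy) and to witness hardness, which is what the rest of the paper needs. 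One pedantic point you may wish to note: when $\sS(y)$ misses part of $f^{-1}(y)$ (in particular when $p_y=0$) the conditional divergence is $+\infty$, so the inequality $\kl{X\mid Y=y}{\sS(y)}\geq\log(1/p_y)$ holds in all cases under the paper's support convention.
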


Next, we break $X$ into short blocks, and show that the relative pseudoentropy is preserved:
\begin{lemma}\label{lem:OWF-nb-KL-hard-to-sample-intro}
Let $n$ be a security parameter, let $(X,Y)$ be random variables
distributed on strings of length $\poly(n)$, let $X=(X_1,\ldots,X_m)$ be
a partition of $X$ into blocks, and let $I$ be uniformly distributed in $[m]$.
If $X$ has relative pseudoentropy at least $\klgap$ given $Y$, then $X_I$ has relative pseudoentropy at least $\klgap/m$ given $(Y,X_1,\ldots,X_{I-1})$.
\end{lemma}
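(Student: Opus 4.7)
The plan is to prove the contrapositive: given an efficient sampler $\sS'$ that achieves relative pseudoentropy less than $\klgap/m$ for $X_I$ given $(Y, X_1, \ldots, X_{I-1}, I)$, construct an efficient sampler $\sS$ that witnesses relative pseudoentropy less than $\klgap$ for $X$ given $Y$, contradicting the hypothesis.

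The construction is the natural ``iterative sampling'' one: on input $y$, the sampler $\sS(y)$ produces blocks $\tX_1, \tX_2, \ldots, \tX_m$ sequentially by setting $\tX_i \from \sS'(y, \tX_1, \ldots, \tX_{i-1}, i)$ and outputting $\tX = (\tX_1, \ldots, \tX_m)$. Since $m = \poly(n)$, this is a polynomial-time procedure whenever $\sS'$ is.

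The core of the argument is the chain rule for relative entropy, applied to the joint distribution of $(X_1, \ldots, X_m, Y)$ versus $(\tX_1, \ldots, \tX_m, Y)$. Since both distributions share the same marginal on $Y$, and since the $i$-th block of $\sS(Y)$ is, by construction, distributed according to $\sS'(Y, \tX_1, \ldots, \tX_{i-1}, i)$ given the previously generated blocks, the chain rule gives
\[
\kl{X, Y}{\sS(Y), Y} = \sum_{i=1}^m \ex[(y, x_{<i}) \from (Y, X_{<i})]{\kl{X_i \mid Y=y, X_{<i}=x_{<i}}{\sS'(y, x_{<i}, i)}}.
\]
On the other hand, a second application of the chain rule (together with the fact that $I$ is uniform on $[m]$ and independent of $(X, Y)$) shows that the left-hand side of the relative pseudoentropy condition for $X_I$ given $(Y, X_{<I}, I)$ with sampler $\sS'$ equals exactly $1/m$ times the same sum. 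Thus the two quantities differ by the factor $m$, so if $\sS'$ beats the threshold $\klgap/m$ then $\sS$ beats the threshold $\klgap$, giving the desired contradiction.

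The only potentially tricky step is setting up the chain rule carefully so that the expectations over $(Y, X_{<i})$ on both sides line up --- in particular, ensuring that the ``real'' prefix distribution is what appears in the outer expectation when computing $\kl{X,Y}{\sS(Y), Y}$ (since the chain rule expands the KL along the left-hand distribution, not the right-hand one). Once this is set up correctly, the rest is a one-line verification and the polynomial-time nature of $\sS$ is immediate from the fact that it makes only $m$ sequential calls to $\sS'$.
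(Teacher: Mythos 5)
Your proposal is correct and matches the argument the paper has in mind for this step (recalled from \cite{VZ12} and mirrored in the proof of Theorem~\ref{thm:hire-nb-hire}): compose the blockwise sampler into a sequential sampler and apply the chain rule for relative entropy, with the conditional terms taken over the true prefix distribution, so that the full KL equals exactly $m$ times the averaged next-block quantity. Your explicit attention to expanding the chain rule along the left-hand (true) distribution is precisely the point the paper relies on, so no further comment is needed.
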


Finally, the main part of the proof is to show that, once we have short blocks, relative pseudoentropy is {\em equivalent} to a gap between
conditional pseudoentropy and real conditional entropy.
\begin{lemma} \label{lem:VZ-characterizing-intro}
Let $n$ be a security parameter, $Y$ be a random variable distributed on
strings of length $\poly(n)$, and $X$ a random variable distributed on strings
of length $O(\log n)$.   Then $X$ has relative pseudoentropy  at least $\klgap$ given $Y$ iff
$X$ has conditional pseudoentropy at least $\ent{X|Y}+\klgap$ given $Y$.
\end{lemma}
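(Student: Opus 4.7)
The plan rests on the identity
\[\kl{(X,Y)}{(\sS(Y),Y)} = \mathbb{E}_{(x,y)\sim(X,Y)}[-\log Q(x\mid y)] - \ent{X\mid Y},\]
where $Q(\cdot\mid y)$ denotes the output distribution of $\sS(y)$, together with Gibbs' inequality $\mathbb{E}_{(z,y)\sim(Z,Y)}[-\log Q(z\mid y)] \geq \ent{Z\mid Y}$, valid for every jointly distributed $(Z,Y)$. Thus the log-likelihood $-\log Q(x\mid y)$ serves as a single ``test function'' that simultaneously detects the KL divergence and (after replacing $X$ by a pseudoentropic $Z$) the entropy gap, and both directions of the lemma exploit this coincidence.

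For the ($\Leftarrow$) direction, let $Z$ witness conditional pseudoentropy $\geq \ent{X\mid Y}+\klgap$ and let $\sS$ be an arbitrary PPT sampler. First smooth $\sS$ by outputting a uniformly random element of $\zo^{|X|}$ with probability $1/n$, obtaining $\sS'$ whose output distribution $Q'(\cdot\mid y)$ is bounded below by $1/\poly(n)$; this affects the KL divergence by only $O(1/n)$. Then $D(x,y)\coloneqq -\log Q'(x\mid y)$ is bounded by $O(\log n)$ and, because $|X|=O(\log n)$, is efficiently approximable to additive $1/\poly$ accuracy by empirically estimating $Q'(x\mid y)$ from polynomially many runs of $\sS'(y)$. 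Hence $D$ is a bounded PPT-computable statistic, and computational indistinguishability $(Z,Y)\approx_c(X,Y)$ yields $|\mathbb{E}[D(X,Y)]-\mathbb{E}[D(Z,Y)]| = \operatorname{negl}(n)$; chaining this with the two identities above gives $\kl{(X,Y)}{(\sS(Y),Y)} \geq \ent{Z\mid Y}-\ent{X\mid Y}-\operatorname{negl}(n) \geq \klgap - \operatorname{negl}(n)$.

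For the ($\Rightarrow$) direction, suppose $X$ has relative pseudoentropy at least $\klgap$ given $Y$; we must produce a joint $(Z,Y)$ with $(Z,Y)\approx_c(X,Y)$ and $\ent{Z\mid Y}\geq\ent{X\mid Y}+\klgap$. I argue by contradiction: assuming no such $Z$ exists, apply a min--max argument over the convex set of candidate $Z$ (entropy is concave, so the constraint set is convex) and over convex combinations of PPT distinguishers, yielding a single PPT distinguisher $D$ with $\mathbb{E}[D(X,Y)] > \sup_{Z:\,\ent{Z\mid Y}\geq k}\mathbb{E}[D(Z,Y)]+\eps$ for some noticeable $\eps>0$, where $k=\ent{X\mid Y}+\klgap$. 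From $D$, construct the Gibbs sampler $\sS(y)$ outputting $x\in\zo^{|X|}$ with probability proportional to $e^{D(x,y)}$; this is PPT because the partition function $\sum_{x\in\zo^{|X|}} e^{D(x,y)}$ is a polynomial-size sum when $|X|=O(\log n)$. Evaluating the supremum by Lagrangian duality (its maximizer over distributions on $\zo^{|X|}$ is exactly the Gibbs distribution, with the entropy constraint tight) identifies it with $\mathbb{E}_y[\log\sum_x e^{D(x,y)}]-k$; substituting into the identity at the start yields $\kl{(X,Y)}{(\sS(Y),Y)} < \klgap - \eps$, contradicting the RPE hypothesis.

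The main obstacle is this min--max step: the class of PPT distinguishers is not itself convex, so one must either allow convex combinations of distinguishers (acceptable in a non-uniform security model) or first pass through a ``metric'' variant of conditional pseudoentropy---which is literally the contrapositive statement obtained after min--max---and then appeal to an equivalence between metric and HILL-style conditional pseudoentropy. This equivalence itself requires $|X|=O(\log n)$, so the short-block hypothesis is used both here and in the efficient samplability of the Gibbs distribution.
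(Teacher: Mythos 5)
Your ($\Leftarrow$) direction is essentially the standard argument and is sound: the cross-entropy statistic $-\log Q'(x\mid y)$ of a slightly smoothed sampler is bounded by $O(\log n)$ and efficiently estimable because $|X|=O(\log n)$, so indistinguishability of $(X,Y)$ and $(Z,Y)$ transfers the Gibbs-inequality lower bound from $Z$ to $X$. The only quibble is quantitative: mixing in the uniform distribution with any weight large enough to make the estimation work costs $1/\poly(n)$, not $\negl(n)$, so you get $\klgap-1/\poly(n)$ rather than $\klgap-\negl(n)$; this matches the slack in the formal statement of \cite{VZ12} and is not a real problem. Your acknowledged reliance on a non-uniform min--max (or a metric-to-HILL step) in the other direction is also a legitimate route.

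The genuine gap is in the ($\Rightarrow$) direction, at the point where you fix the Gibbs temperature to $1$. The maximizer of $\mathbb{E}[D(Z,Y)]$ over $\{Z:\ent{Z|Y}\geq k\}$ is a Gibbs distribution $Q_\lambda(\cdot\mid y)\propto e^{D(x,y)/\lambda}$ at the temperature $\lambda^*$ making the constraint tight, not at $\lambda=1$; at $\lambda=1$ weak duality gives only $\sup_{Z:\ent{Z|Y}\geq k}\mathbb{E}[D(Z,Y)]\leq \mathbb{E}_y\bigl[\log\sum_x e^{D(x,y)}\bigr]-k$, whereas your substitution needs the \emph{reverse} inequality in order to turn $\mathbb{E}[D(X,Y)]>\sup+\eps$ into $\mathbb{E}[D(X,Y)]>\mathbb{E}_y\bigl[\log\sum_x e^{D(x,y)}\bigr]-k+\eps$ and hence into $\kl{X,Y}{\sS(Y),Y}<\klgap-\eps$. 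Concretely the inference fails: take $Y$ trivial, $D$ the indicator of a set $S\subseteq\zo^{\ell}$ of size $2^m$, $X$ uniform on $S$ (so $\ent{X}=m$), and $k=m+\klgap$ with $\klgap\approx(\ell-m)/2$; then every $Z$ with $\ent{Z}\geq k$ has $\mathbb{E}[D(Z)]\approx(\ell-k)/(\ell-m)$, so your hypothesis holds with constant $\eps$, yet the $\lambda=1$ sampler is within a factor $2$ of uniform on $\zo^{\ell}$ and $\kl{X}{\sS}\approx \ell-m-1$, far above $\klgap-\eps$. The repair---which is what \cite{VZ12} in effect does---is to keep the multiplier: sample from $Q_\lambda\propto 2^{D/\lambda}$ with $\lambda$ an (approximately) optimal dual variable, found by trying polynomially many values, which is feasible because $D$ is bounded and $|X|=O(\log n)$; the resulting bound is $\kl{X,Y}{\sS(Y),Y}<\klgap-\eps/\lambda^*$, so the distinguishing advantage is attenuated by $\lambda^*$, and the short-block hypothesis is needed a third time to keep $\lambda^*$ (hence the loss) polynomially bounded. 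Without this temperature step your contradiction does not go through.
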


Putting these three lemmas together, we see that when $f$ is a one-way function, and we break $X$ into blocks of length $O(\log n)$ to obtain $(f(X),X_1,\ldots,X_m)$, on average, the conditional pseudoentropy of $X_I$
given $(f(X),X_1,\ldots,X_{I-1})$ is larger than its real conditional entropy by
$\omega(\log n)/m$.  This tells us that the next-block pseudoentropy
of $(f(X),X_1,\ldots,X_m)$ is larger than its real entropy by $\omega(\log n)$, as claimed in Theorem~\ref{thm:VZ-intro}.

We remark that Lemma~\ref{lem:VZ-characterizing-intro} explains why we need
to break the input of the one-way function into short blocks:  it is false when $X$ is long.  Indeed, if $f$ is a one-way function, then we have
already seen that $X$ has $\omega(\log n)$ relative pseudoentropy given $f(X)$
(Lemma~\ref{lem:OWF-KL-hard-to-sample-intro}), but it
does not have conditional pseudoentropy noticeably larger than $\ent{X|f(X)}$ given $f(X)$ (as correct preimages can be efficiently distinguished from incorrect ones using $f$).

\subsection{Inaccessible entropy}

As mentioned above, for constructing SHCs from one-way functions, the
notion of next-block pseudoentropy is replaced with next-block accessible entropy:

\begin{definition}[next-block accessible entropy, informal]
Let $n$ be a security parameter, and $Y=(Y_1,\ldots,Y_m)$ be a random variable distributed on strings of length $\poly(n)$.
We say that $Y$ has {\em next-block accessible entropy} at most $k$ if
the following holds.

Let $\tG$ be any probabilistic $\poly(n)$-time algorithm that takes a sequence of uniformly random
strings $\tR=(\tR_1,\ldots,\tR_m)$ and outputs
a sequence $\tY=(\tY_1,\ldots,\tY_m)$ in an ``online fashion''
by which we mean that $\tY_i = \tG(\tR_1,\ldots,\tR_i)$ depends
on only the first $i$ random strings of $\tG$ for $i=1,\ldots,m$.
Suppose further that $\Supp(\tY)\subseteq \Supp(Y)$.

Then we require:
$$\sum_{i=1}^m \ent{\tY_i | \tR_1,\ldots,\tR_{i-1}} \leq k.$$
\end{definition}

(Next-block) accessible entropy differs from (next-block) pseudoentropy in two ways:
\begin{enumerate}
	\item Accessible entropy is useful as an {\em upper} bound on computational
		entropy, and is interesting when it is {\em smaller} than the real
		entropy $\ent{Y}$.  We refer to the gap $\ent{Y}-k$ as the
	{\em next-block inaccessible entropy} of $Y$.
	\item The accessible entropy adversary $\tG$ is trying to {\em
		generate} the random variables $Y_i$ conditioned on the history
		rather than recognize them.  Note that we take the ``history''
		to not only be the previous blocks $(\tY_1,\ldots,\tY_{i-1})$,
		but the coin tosses $(\tR_1,\ldots,\tR_{i-1})$ used to generate
		those blocks.
\end{enumerate}
Note that one unsatisfactory aspect of the definition is that
		when the random variable $Y$ is not {\em flat} (i.e. uniform on its
		support), then there can be an adversary $\tG$ achieving
		accessible entropy even {\em larger} than $\ent{Y}$, for example
		by making $\tY$ uniform on $\Supp(Y)$.

Similarly to (and predating) Theorem~\ref{thm:VZ-intro}, it is known
that one-wayness implies next-block inaccessible entropy.

\begin{theorem}[\cite{HaitnerReVaWe09}] \label{thm:HRVW-intro}
Let $f : \zo^n\rightarrow \zo^n$ be a one-way function, let $X$ be uniformly distributed in $\zo^n$, and let $(Y_1,\ldots,Y_m)$ be
a partition of $Y=f(X)$ into blocks of length $O(\log n)$.  Then
$(Y_1,\ldots,Y_m,X)$ has next-block accessible entropy at most $n-\omega(\log n)$.
\end{theorem}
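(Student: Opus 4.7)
The plan is to mirror the three-step modular proof of Theorem~\ref{thm:VZ-intro}, replacing each lemma with its accessible-entropy counterpart. For the first step, applying Lemma~\ref{lem:OWF-KL-hard-to-sample-intro} with $(X, Y) = (U_n, f(U_n))$ immediately yields $\omega(\log n)$ of relative pseudoentropy of $X$ given $Y$. For the second step, I would invoke the blocking Lemma~\ref{lem:OWF-nb-KL-hard-to-sample-intro} (or its natural ``input-side'' analog appropriate for the accessible-entropy setting, where the generator reveals $Y$ block by block before committing to $X$) to distribute the $\omega(\log n)$ budget across the $m+1$ online steps of generating $(\tY_1, \ldots, \tY_m, \tX)$.

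The third step, and the main new ingredient, is an accessible-entropy analog of Lemma~\ref{lem:VZ-characterizing-intro}: for a short block (length $O(\log n)$), a relative-pseudoentropy gap $\klgap_i$ at step $i$ forces the online generator's contribution $\ent{\tY_i \mid \tR_{<i}}$ to be at most $\ent{Y_i \mid Y_{<i}} - \klgap_i$. Summing these per-step inequalities for $i = 1, \ldots, m$ together with a corresponding bound $\ent{\tX \mid \tR_{\leq m}} \leq \ent{X \mid Y} - \klgap_{m+1}$ for the final step would give
\[ \text{accessible entropy of } (Y_1, \ldots, Y_m, X) \;\leq\; \ent{Y} + \ent{X \mid Y} - \omega(\log n) \;=\; n - \omega(\log n), \]
as desired.

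The main obstacle is this third step. Lemma~\ref{lem:VZ-characterizing-intro} for pseudoentropy is an equivalence proven via a duality between distinguishers and samplers, but accessible entropy involves a generator under a support constraint---a structurally different object. I expect the short block length to play the same role it does in Lemma~\ref{lem:VZ-characterizing-intro}: the $O(\log n)$-bit block size makes the true conditional $Y_i \mid Y_{<i}$ nearly flat on its support, so a KL-divergence gap from an efficiently samplable distribution translates into an entropy gap; and an online generator $\tG$ can be converted into a relative-hardness sampler by enumeration over the $2^{O(\log n)} = \poly(n)$ possible values of each block. The final ``block'' $X$, of length $n$ rather than $O(\log n)$, requires separate treatment: here I would exploit that the true conditional $X \mid Y$ is uniform on the preimage set $f^{-1}(Y)$, so preimage-set flatness substitutes for short-block flatness, letting the KL hardness of Lemma~\ref{lem:OWF-KL-hard-to-sample-intro} be converted into an entropy loss on $\tX$.
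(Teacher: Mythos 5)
Your plan follows the left-hand (``fix $\tG$'') branch of Figure~\ref{fig:notions-diagram}, and that branch does not reach inaccessible entropy. The gap is in your first and third steps: Lemma~\ref{lem:OWF-KL-hard-to-sample-intro} (relative pseudoentropy) is hardness only against the \emph{honest} sampler $x\mapsto(f(x),x)$ with an adversarial simulator $\sS(Y)$, whereas an accessible-entropy adversary is an arbitrary efficient online generator $\tG$ that produces its own $(\tY_1,\dots,\tY_m,\tX)$ from its own coins, constrained only to lie in $\Supp(Y_1,\dots,Y_m,X)$. Nothing in relative pseudoentropy constrains such a $\tG$; to bound $\sum_i \ent{\tY_i\mid\tR_{<i}}$ one must reduce an arbitrary $\tG$ back to an $f$-inverter, and that reduction requires simulating $\tG$'s coins consistently with a \emph{given} $y$ drawn from the true distribution. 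This is exactly why the paper introduces the stronger notion of (witness) hardness in relative entropy (Definition~\ref{def:KL-hard-intro}, Lemma~\ref{lem:OWF-KL-hard-intro}, Theorem~\ref{thm:witness-kl-hard}), which quantifies over all pairs $(\tG,\sS)$ supported on $\Pi^f$: given an accessible-entropy adversary $\tG$, the paper constructs the rejection-sampling simulator $\Sim^{\tG,T}$ so that $\tG_\w(\sS(Y))$ inverts $f$, and the resulting KL lower bound is then decomposed block-by-block (Theorem~\ref{thm:bkl-hard}). Your postulated per-step inequality $\ent{\tY_i\mid\tR_{<i}}\le \ent{Y_i\mid Y_{<i}}-\klgap_i$ is precisely what needs this machinery; it does not follow from hardness against the honest generator, so the ``main new ingredient'' of your step 3 is the entire content of the theorem and remains unproved.

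You have also misattributed the roles of short blocks and of flatness. Short blocks do \emph{not} make $Y_i\mid Y_{<i}$ nearly flat (the blocks of $f(X)$ can be arbitrarily biased), and the paper explicitly does not require per-block flatness. The short block length is used only so that the rejection-sampling simulator recovers coins for each block of $\tG$ with high probability within $\poly(n)$ trials (Lemma~\ref{lem:rejection-sampling-intro}, Lemma~\ref{lem:stexpectation}), i.e.\ it buys \emph{efficiency of the simulator}, not flatness. The conversion of KL gaps into entropy gaps is instead an exact identity, $\sum_i\bigl(\ent{Y_i\mid Y_{<i}}-\ent{\tY_i\mid\tR_{<i}}\bigr)=\sum_i\kl{\tY_i\mid\tR_{<i},\tY_{<i}}{Y_i\mid R_{<i},Y_{<i}}$, valid because the \emph{joint} distribution $(f(X),X)$ is flat when $X$ is uniform (Theorem~\ref{thm:nb-inacc}, Lemma~\ref{lem:KL-to-inaccessible-intro}); this handles the long final block $X$ at the same time, so no separate preimage-flatness argument is needed there. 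In short, the correct chain is one-wayness $\Rightarrow$ witness hardness in relative entropy $\Rightarrow$ (via online generators, chain rule, and rejection sampling over short blocks) next-block inaccessible relative entropy $\Rightarrow$ (via flatness of the whole distribution) inaccessible entropy, rather than the relative-pseudoentropy route you propose.
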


Unfortunately, however, the existing proof of
Theorem~\ref{thm:HRVW-intro} is not modular like that of
Theorem~\ref{thm:VZ-intro}.  In particular, it does not isolate the step
of relating one-wayness to entropy-theoretic measures (like
Lemma~\ref{lem:OWF-KL-hard-to-sample-intro} does) or the significance of
having short blocks (like Lemma~\ref{lem:VZ-characterizing-intro} does).

\subsection{Our results}

We remedy the above state of affairs by providing a new, more general
notion of hardness in relative entropy that allows us to obtain next-block inaccessible entropy in a modular way while also encompassing what is needed for next-block pseudoentropy.

Like in relative pseudoentropy, we will consider a pair of jointly distributed random variables $(Y,X)$.  Following the spirit of accessible entropy, the adversary $\tG$ for our new notion will try to {\em generate} $Y$ together with $X$, rather than taking $Y$ as input.  That is, $\tG$ will
take randomness $\tR$ and output a pair
$(\tY,\tX) = \tG(\tR) = (\tG_1(\tR),\tG_2(\tR))$, which we require
to be always within the support of $(Y,X)$.  Note that $\tG$ need not be
an online generator; it can generate both $\tY$ and $\tX$ using the same randomness $\tR$.   Of course, if $(Y,X)$ is efficiently samplable (as it would be in most cryptographic applications), $\tG$ could generate
$(\tY,\tX)$ identically distributed to $(Y,X)$ by just using the ``honest''
sampler $\sG$ for $(Y,X)$.   So, in addition, we require that the adversary $\tG$
also come with a {\em simulator} $\sS$, that can simulate its coin tosses given
only $\tY$.
The goal of the adversary is to minimize
the relative entropy
$$\kl{\tR,\tY}{\sS(Y),Y}$$
for a uniformly random $\tR$. This divergence measures both how well $\tG_1$
approximates the distribution of $Y$ as well as how well $\sS$ simulates the
corresponding coin tosses of $\tG_1$.  Note that when $\tG$ is the honest
sampler $\sG$, the task of $\sS$ is exactly to sample from the conditional
distribution of $\tR$ given $\sG_1(\tR) = Y$.  However, the adversary may reduce
the divergence by instead designing the sampler $\tG$ and simulator $\sS$ to
work in concert, potentially trading off how well $\tG(\tR)$ approximates $Y$
in exchange for easier simulation by $\sS$.  Explicitly, the definition is as
follows.
\begin{definition}[hardness in relative entropy, informal version of
	Definition~\ref{def:kl-hard}]\label{def:KL-hard-intro}
Let $n$ be a security parameter, and $(Y,X)$ be a pair of random variables jointly distributed over strings of length
$\poly(n)$.  We say that $(Y,X)$ has \emph{hardness at least $\klgap$ in relative entropy} if the following holds.

Let $\tG=(\tG_1,\tG_2)$ and $\sS$ be probabilistic $\poly(n)$-time algorithms
such that $\Supp(\tG(\tR))\subseteq \Supp((Y,X))$, where $\tR$ is uniformly
distributed. Then writing $\tY = \tG_1(\tR)$, we require that
$$\kl{\tR, \tY}{\sS(Y), Y} \geq \klgap.$$
\end{definition}

Similarly to Lemma~\ref{lem:OWF-KL-hard-to-sample-intro}, we can
show that one-way functions achieve this notion of hardness in relative entropy.
\begin{lemma} \label{lem:OWF-KL-hard-intro}
Let $f : \zo^n\rightarrow \zo^n$ be a one-way function and let $X$ be uniformly distributed in $\zo^n$.   Then
$(f(X),X)$ has hardness $\omega(\log n)$ in relative entropy.
\end{lemma}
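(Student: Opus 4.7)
The plan is to show the contrapositive: any PPT pair $(\tG, \sS)$ with small $\klgap := \kl{\tR, \tY}{\sS(Y), Y}$ (where $\tY = \tG_1(\tR)$ for uniform $\tR$) yields an efficient inverter for $f$. The natural inverter $A$ on input $y$ samples $r \gets \sS(y)$ and outputs $\tG_2(r)$. Correctness is immediate from the support condition: since $\Supp(\tG(\tR)) \subseteq \Supp((f(X), X))$, every $r$ in the range of $\tG$ satisfies $f(\tG_2(r)) = \tG_1(r)$, so $f(A(y)) = y$ exactly when the pair $(r,y)$ drawn from $Q := (\sS(Y), Y)$ lies in the ``consistency set'' $G := \{(r,y) : \tG_1(r) = y\}$.

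It then suffices to lower bound $Q(G)$. Letting $P$ denote the joint distribution of $(\tR, \tY)$, by construction $P(G) = 1$. Applying the data processing inequality for KL divergence to the indicator of $G$---which turns $P$ into a point mass on $1$ and $Q$ into $\mathrm{Bernoulli}(Q(G))$---gives
$$\log\frac{1}{Q(G)} \;=\; \kl{\mathrm{Bernoulli}(1)}{\mathrm{Bernoulli}(Q(G))} \;\le\; \kl{P}{Q} \;=\; \klgap,$$
so $Q(G) \ge 2^{-\klgap}$. Hence $A$ inverts $f$ on $f(X)$ (for uniform $X$) with probability at least $2^{-\klgap}$. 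If $\klgap$ were only $O(\log n)$, this would be $1/\poly(n)$, contradicting one-wayness of $f$, so $\klgap = \omega(\log n)$ as required.

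I do not anticipate any real obstacle: the argument is a short extension of the one-line proof of Lemma~\ref{lem:OWF-KL-hard-to-sample-intro}. The only conceptual twist is that the simulator $\sS$ is no longer required to produce a preimage itself; rather, it feeds randomness into the second coordinate $\tG_2$ of the sampler, and the support condition ensures this yields a valid inverse whenever $\tG_1$ happens to output the right image.
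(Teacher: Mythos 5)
Your argument is correct and is essentially the paper's own proof (Lemma~\ref{lem:kl-hard}, applied via Theorem~\ref{thm:kl-hard} to the search problem of Example~\ref{ex:invf}): the same inverter $y\mapsto\tG_\w(\sS(y))$, the same use of the support condition to reduce success to the event $\tG_1(\sS(Y))=Y$, and the same $2^{-\klgap}$ lower bound on its probability. The only cosmetic difference is that the paper obtains this bound by writing the success probability as $\ex[\substack{r\from\tR\\y\from\tG_1(r)}]{2^{-\skl{r,y}{\tR,\tG_1(\tR)}{\sS(Y),Y}}}$ and applying Jensen's inequality, whereas you apply the data-processing inequality to the indicator of the consistency set---an equivalent calculation.
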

Note that this lemma implies Lemma~\ref{lem:OWF-KL-hard-to-sample-intro}.
If we take $\tG$ to be the ``honest'' sampler $\tG(x)=(f(x),x)$, then we
have:
$$\kl{X,f(X)}{\sS(Y),Y} = \kl{\tR,\tY}{\sS(Y),Y},$$
which is is $\omega(\log n)$ by Lemma~\ref{lem:OWF-KL-hard-intro}. That is,
relative pseudoentropy (as in
Definition~\ref{def:KL-hard-to-sample} and
Lemma~\ref{lem:OWF-KL-hard-to-sample-intro}) is obtained
by fixing $\tG$ and focusing on the hardness for the simulator $\sS$, i.e. the
divergence $\kl{X, Y}{S(Y), Y}$. Furthermore, the step of breaking into short
blocks (Lemma~\ref{lem:OWF-nb-KL-hard-to-sample-intro}) is equivalent to
requiring the simulator be \emph{online} and
showing that relative pseudoentropy implies the following notion of
\emph{next-block relative pseudoentropy}:
\begin{definition}[next-block relative pseudoentropy, informal]
	Let $n$ be a security parameter, $(X,Y)$ be jointly distributed random
	variables over strings of length $\poly(n)$, and let $X = (X_1, \dots,
	X_m)$ be a partition of $X$ into blocks. We say that $X$ has {\em
	next-block relative pseudoentropy at least $\klgap$ given $Y$} if for
	all probabilistic polynomial-time $\sS$, we have
	\[\sum_{i=1}^m \kl{X_i|X_{< i},Y}{\sS(X_{<i},Y)|X_{<i},Y} \geq \klgap,\]
	where we use the notation $z_{<i}\eqdef(z_1,\ldots,z_{i-1})$.

	Here, the simulator $\sS$ is required to be ``online'' in the sense
	that it cannot simulate $(X_1, \dots, X_m)$ at once, but must simulate
	$X_i$ only as a function of $X_{<i}$ and $Y$.
\end{definition}

In particular, Lemma~\ref{lem:VZ-characterizing-intro} is thus equivalent to
the statement that having next-block relative pseudoentropy at least $\klgap$
for blocks of length $O(\log n)$ is equivalent to having next-block
pseudoentropy at least $\klgap + \sum_{i=1}^m \ent{X_i|X_{<i},Y}$ in the sense
of Definition~\ref{defn:nbpe-intro}.

Conversely, we show that inaccessible entropy arises from hardness in relative
entropy by first requiring the \emph{generator} $\sG$ to be online and breaking
the relative entropy into blocks to obtain the following next-block hardness
property.

\begin{definition}[next-block hardness in relative entropy, informal]
Let $n$ be a security parameter, and $Y=(Y_1,\ldots,Y_m)$ be a random
variable distributed on strings of length $\poly(n)$.  We say that $Y$
has {\em next-block hardness at least $\klgap$ in relative entropy} if the following
holds.

Let $\tG$ be any probabilistic $\poly(n)$-time algorithm that takes a sequence
of uniformly random strings $\tR=(\tR_1,\ldots,\tR_m)$ and outputs a sequence
$\tY=(\tY_1,\ldots,\tY_m)$ in an ``online fashion'' by which we mean that
$\tY_i = \tG(\tR_1,\ldots,\tR_i)$ depends on only the first $i$ random strings
of $\tG$ for $i=1,\ldots,m$.  Suppose further that $\Supp(\tY)\subseteq
\Supp(Y)$.  Additionally, let $\sS$ be a probabilistic $\poly(n)$-time
algorithms such for all $i=1,\dots,m$, $\sS$ takes as input
$\hR_1,\dots,\hR_{i-1}$ and $Y_i$ and outputs $\hR_i$, where $\hR_j$ has the
same length as $\tR_j$. Then we require that for all such
$(\tG,\sS)$, we have:
$$\sum_{i=1}^m \kl{\tR_i, \tY_i | \tR_{<i},\tY_{<i}}{ \hR_i, Y_i | \hR_{<i}, Y_{<i}} \geq \klgap.$$
\end{definition}

Observe that hardness in relative entropy can be seen as the specific case of
next-block hardness in relative entropy when there is only one block
(\emph{i.e.,} setting $m=1$ in the previous definition).

Next, we fix the \emph{simulator}, analogously to how relative pseudoentropy
was obtained by fixing the generator, and obtain \emph{next-block inaccessible
relative entropy}:
\begin{definition}[next-block inaccessible relative entropy,
	informal] Let $n$ be a security parameter, and $Y=(Y_1,\ldots,Y_m)$ be
	a random variable distributed on strings of length $\poly(n)$.  We say that
	$Y$ has {\em next-block inaccessible relative entropy at least $\klgap$} if the
	following holds.

Let $\tG$ be any probabilistic $\poly(n)$-time algorithm that takes a sequence of uniformly random
strings $\tR=(\tR_1,\ldots,\tR_m)$ and outputs
a sequence $\tY=(\tY_1,\ldots,\tY_m)$ in an online fashion,
and such that $\Supp(\tY)\subseteq \Supp(Y)$.
Then we require that for all such $\tG$, we have:
$$\sum_{i=1}^m \kl{\tY_i | \tR_{<i},\tY_{<i}}{ Y_i |
R_{<i}, Y_{<i}} \geq \klgap,$$
where $R=(R_1,\ldots,R_m)$ is a dummy random variable independent of $Y$.
\end{definition}
That is, the goal of the online generator $\tG$ is to generate $\tY_i$
given the history of coin tosses $\tR_{<i}$ with the same conditional
distribution as $Y_i$ given $Y_{<i}$. As promised, there is no explicit
simulator in the definition of next-block inaccessible relative entropy,
as we essentially dropped all $\hR$ variables from the definition of
next-block hardness in relative entropy.
Nevertheless we
can obtain it from hardness in relative entropy by using sufficiently short blocks:

\begin{lemma} \label{lem:rejection-sampling-intro}
Let $n$ be a security parameter, let $Y$ be a random variable
distributed on strings of length $\poly(n)$, and let
$Y=(Y_1,\ldots,Y_m)$ be a partition of $Y$ into blocks of length $O(\log
n)$.

If $(Y_1,\dots,Y_m)$ has next-block hardness at least $\klgap$ in relative entropy,
then $(Y_1,\ldots,Y_m)$ has next-block inaccessible relative entropy at
least $\klgap-\negl(n)$.
\end{lemma}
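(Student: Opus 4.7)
The plan is to show that for any online generator $\tG$ witnessing the inaccessible relative entropy game, one can construct an online simulator $\sS$ so that the pair $(\tG,\sS)$ satisfies the next-block hardness hypothesis, with the two relative-entropy sums differing by only $\negl(n)$. Given $\tG$, I take $\sS$ to be the natural rejection sampler: on input $(\hR_{<i},Y_i)$, for up to $T=n^{c}$ attempts (with $c$ a sufficiently large constant fixed later) draw a fresh uniform $r$, compute the $i$th output of $\tG$ on past randomness $\hR_{<i}$ and current randomness $r$, and output $r$ if this equals $Y_i$; if all $T$ trials fail, fall back to a uniform string of the appropriate length. Since $\tG$ is polynomial-time, $|\tR_i|\leq\poly(n)$, so $\sS$ is also online and polynomial-time.

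Applying the chain rule for KL divergence twice, the next-block hardness hypothesis applied to $(\tG,\sS)$ gives
$$\klgap \;\leq\; \sum_{i=1}^m \kl{\tY_i \mid \tR_{<i},\tY_{<i}}{Y_i \mid \hR_{<i}, Y_{<i}} \;+\; \sum_{i=1}^m \kl{\tR_i \mid \tY_i, \tR_{<i},\tY_{<i}}{\hR_i \mid Y_i, \hR_{<i}, Y_{<i}}.$$
Because $\hR_{<i}$ is produced by $\sS$ as a function of $Y_{<i}$ and fresh randomness, the conditional $Y_i \mid \hR_{<i}, Y_{<i}$ coincides with $Y_i \mid Y_{<i}$, so the first sum is exactly the next-block inaccessible relative entropy quantity for $\tG$. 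It therefore suffices to show that the second sum, the ``simulator error,'' is $\negl(n)$.

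To bound each simulator-error term, fix $i$ and a past realization $(r_{<i}, y_{<i})$, and write $p(y) = \Pr_r[\tG_i(r_{<i},r) = y]$ for the generator's conditional probability of output $y$. Conditioned on $\tY_i = y$ with $p(y)>0$, rejection sampling succeeds with probability $\alpha(y) = 1-(1-p(y))^T$ and, when successful, returns a sample from the true conditional distribution $\tR_i \mid \tY_i = y$; the resulting mixture structure yields $\kl{\tR_i \mid \tY_i = y}{\hR_i \mid Y_i = y} \leq \log(1/\alpha(y))$. Split the support of $\tY_i$ into $G = \{y : p(y) \geq \log^2(n)/T\}$ and its complement $B$. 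For $y \in G$, $\alpha(y) \geq 1-\negl(n)$, hence $\log(1/\alpha(y)) = \negl(n)$. For $y \in B$ with $p(y)>0$, the crude bound $\log(1/\alpha(y)) \leq \log(1/p(y)) \leq |\tR_i| \leq \poly(n)$ applies, but the block-length hypothesis $|\tY_i| = O(\log n)$ bounds $|B| \leq 2^{O(\log n)} = \poly(n)$, so $\Pr[\tY_i \in B] \leq |B| \cdot \log^2(n)/T$, which is $\negl(n)$ provided $c$ is chosen sufficiently large. Combining the two cases, the per-block contribution is $\negl(n)$, and since $m \leq \poly(n)$ the total simulator error is also $\negl(n)$, completing the proof.

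The main obstacle is this last step: the quantity $\log(1/\alpha(y))$ can be as large as $\poly(n)$ when $p(y)$ is tiny, so merely running many rejection trials is not enough --- one must additionally exploit the block-length hypothesis to control the probability of hitting such exceptional $y$, paralleling how Lemma~\ref{lem:VZ-characterizing-intro} also required short blocks.
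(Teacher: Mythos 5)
Your proof is correct and is essentially the paper's own argument: the paper likewise instantiates the simulator by rejection sampling with a polynomial trial bound $T$ (Algorithm~\ref{algo:bounded}), uses the chain rule to split the next-block hardness quantity into the simulator-free inaccessible-relative-entropy sum plus a rejection-sampling error term (Lemma~\ref{lem:bkl-reduction}), and exploits the $O(\log n)$ block length to control that error (Lemma~\ref{lem:stexpectation}, Theorem~\ref{thm:bkl-hard}). The differences are cosmetic: the paper's simulator outputs $\bot$ on failure, so conditioned on success the randomness-conditional KL term vanishes exactly and the entire error appears as $\log(1/\alpha(y_i))$ in the block-marginal term, whereas your uniform fallback places the same quantity in the randomness term via the mixture bound $\kl{\tR_i|\tY_i=y}{\hR_i|Y_i=y}\leq\log(1/\alpha(y))$; and the paper bounds the expectation of $\log(1/\alpha)$ by Jensen together with convexity of $x/(1-(1-x)^T)$ (Lemma~\ref{claim:convex}), where your good/bad split at $p(y)\geq\log^2(n)/T$ is a more elementary route to the same $\poly(n)/T$-type bound. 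One phrase to tighten: for any fixed $T=n^c$ your bound is only an inverse polynomial, not negligible, so to conclude ``at least $\klgap-\negl(n)$'' you should say explicitly that the inaccessible-relative-entropy sum of $\tG$ does not depend on the simulator, hence the bound holds simultaneously for every polynomial $T$ and the deficit is below every inverse polynomial; this is the same quantification by which the paper's Theorem~\ref{thm:bkl-hard} (loss $\klgap'$ for every $\klgap'>0$, with time degrading polynomially in $1/\klgap'$) yields the informal lemma.
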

An intuition for the proof is that since the blocks are of logarithmic length, given $Y_i$ we can
simulate the corresponding coin tosses of $\tR_i$ of $\tG$ by rejection sampling and succeed with high probability in $\poly(n)$ tries.

A nice feature of the definition of next-block inaccessible relative entropy
compared to inaccessible entropy is that it is meaningful even for non-flat
random variables, as the Kullback--Leibler divergence is always nonnegative.  Moreover, for flat
random variables, it equals the inaccessible entropy:

\begin{lemma} \label{lem:KL-to-inaccessible-intro}
Suppose $Y=(Y_1,\ldots,Y_m)$ is a flat random variable.  Then $Y$ has
next-block inaccessible relative entropy at least $\klgap$ if and only
if $Y$ has accessible entropy at most $\ent{Y}-\klgap.$
\end{lemma}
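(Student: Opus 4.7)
The lemma reduces to establishing, for every admissible online generator $\tG$ (probabilistic polynomial-time with $\Supp(\tY)\subseteq\Supp(Y)$), the exact identity
\[
\sum_{i=1}^m \kl{\tY_i\mid \tR_{<i},\tY_{<i}}{Y_i\mid R_{<i},Y_{<i}} \;=\; \ent{Y} \,-\, \sum_{i=1}^m \ent{\tY_i\mid\tR_{<i}}.
\]
Given this identity, quantifying over $\tG$ immediately yields both directions of the iff: the left-hand side is at least $\klgap$ for every $\tG$ precisely when $\sum_i \ent{\tY_i\mid\tR_{<i}} \leq \ent{Y}-\klgap$ for every $\tG$, which is exactly accessible entropy at most $\ent{Y}-\klgap$.

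To prove the identity, first simplify the conditioning in each KL term using two observations: since $R$ is independent of $Y$, $Y_i$ given $R_{<i}=r_{<i},Y_{<i}=y_{<i}$ is distributed as $Y_i$ given $Y_{<i}=y_{<i}$; and since $\tG$ is online, $\tY_{<i}$ is a deterministic function $\tG_{<i}(\tR_{<i})$ of $\tR_{<i}$, so conditioning on $(\tR_{<i},\tY_{<i})$ is the same as conditioning on $\tR_{<i}$ alone. Next, expand each KL via the standard decomposition $\kl{p}{q}=-\ent{p}+\mathbb{E}_p[\log(1/q)]$ and sum over $i$. The entropy contributions collapse to $-\sum_i\ent{\tY_i\mid\tR_{<i}}$ by definition, while the cross-entropy contributions, after applying the chain rule inside the logarithm, collapse to
\[
\mathbb{E}_{\tY}\!\left[\log\frac{1}{\pr{Y=\tY}}\right].
\]
This is the only place flatness enters: because $\Supp(\tY)\subseteq\Supp(Y)$, every realization of $\tY$ lies in $\Supp(Y)$, and by flatness $\pr{Y=\tY}=1/|\Supp(Y)|$ almost surely, so the expectation equals $\log|\Supp(Y)|=\ent{Y}$.

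There is no real obstacle; the plan is essentially bookkeeping. The two things to be careful about are tracking that $\tY_{<i}$ and $\tR_{<i}$ carry the same information for online $\tG$, and verifying that the chain-rule factorization of $\pr{Y=\tY}$ inside the logarithm is legitimate --- which is precisely what the support condition $\Supp(\tY)\subseteq\Supp(Y)$ buys us, by ensuring every conditional probability being divided by is strictly positive. Flatness is invoked once, and the equivalence established is exact (no $\negl(n)$ slack), as one would expect from an information-theoretic identity.
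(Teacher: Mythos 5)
Your proposal is correct and takes essentially the same route as the paper's formal proof (Theorem~\ref{thm:nb-inacc}): both reduce the lemma to the exact identity $\sum_{i}\kl{\tY_i|\tR_{<i},\tY_{<i}}{Y_i|R_{<i},Y_{<i}} = \ent{Y}-\sum_i\ent{\tY_i|\tR_{<i}}$, using the same ingredients --- the conditioning simplifications ($\tY_{<i}$ determined by $\tR_{<i}$, dummy $R$ independent of $Y$), the chain rule, and flatness entering only through $\pr{Y=\tilde y}=2^{-\ent{Y}}$ for every $\tilde y\in\Supp(Y)\supseteq\Supp(\tY)$. The only cosmetic difference is that the paper establishes the identity pointwise at the sample-entropy level (which also yields the min-relative-entropy/max-entropy variant) and then takes expectations, whereas you work directly in expectation via the cross-entropy decomposition; for the lemma as stated this changes nothing.
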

Intuitively, this lemma comes from the identity that if $Y$ is a flat
random variable and $\Supp(\tY)\subseteq \Supp(Y)$, then $\ent{\tY} =
\ent{Y}-\kl{\tY}{Y}$. We stress that we do not require the individual
blocks $Y_i$ have flat distributions, only that the random variable $Y$
as a whole is flat. For example, if $f$ is a function and $X$ is
uniform, then $(f(X), X)$ is flat even though $f(X)$ itself may be far
from flat.

Putting together Lemmas~\ref{lem:OWF-KL-hard-intro},
\ref{lem:rejection-sampling-intro}, and \ref{lem:KL-to-inaccessible-intro}, we
obtain a new, more modular (and slightly tighter) proof of Theorem~\ref{thm:HRVW-intro}.  The
reduction implicit in the combination of these lemmas is the same as the one in
\cite{HaitnerReVaWe09}, but the analysis is different. (In particular,
\cite{HaitnerReVaWe09} makes no use of KL divergence.) Like the existing proof
of Theorem~\ref{thm:VZ-intro}, this proof separates the move from one-wayness
to a form of hardness involving relative entropies, the role of short blocks,
and the move from hardness in relative entropy to computational entropy,
as summarized in Figure~\ref{fig:notions-diagram}.
Moreover, this further illumination of and toolkit for notions of computational
entropy may open the door to other applications in cryptography.
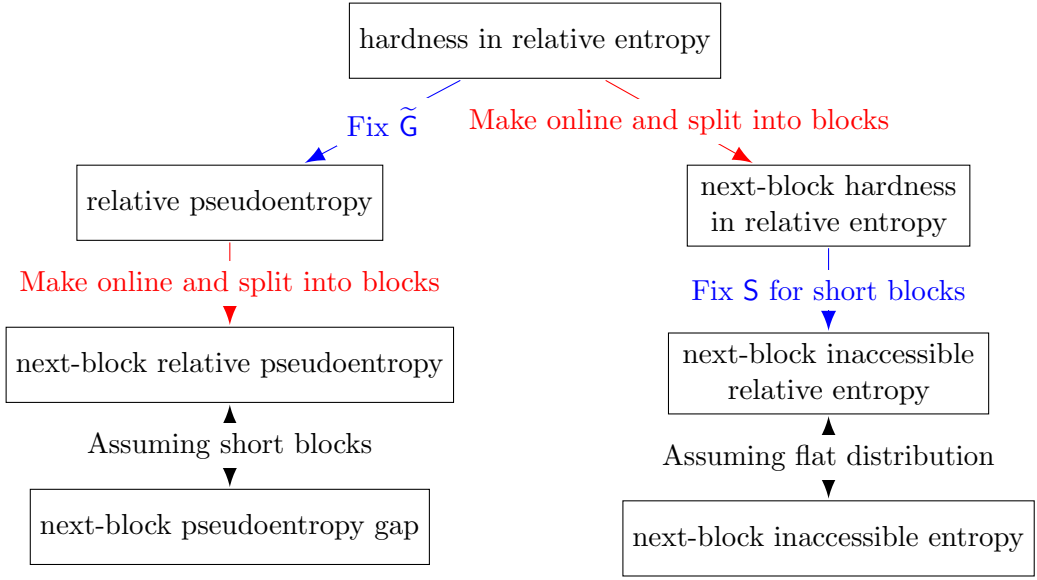
\begin{figure}[htbp]
\centering
\begin{tikzpicture}
	\tikzset{>=latex,ultra thick}
	\tikzstyle{pred} = [rectangle, minimum width=2.5cm, minimum height=1cm, text centered, draw=black];
	\tikzstyle{fix} = [draw,->,blue];
	\tikzstyle{online} = [draw,->,red]
	\tikzstyle{equiv} = [draw, <->];
	\node (hire) [pred] {hardness in relative entropy};
	\node (relpe) [pred,below left=1.1cm and -0.5cm of hire] {relative pseudoentropy};
	\node (nbhre) [pred,text width=3.5cm, below right=1.1cm and -0.5cm of hire] {next-block hardness in relative entropy};
	\node (nbrpe) [pred,below=1.1cm of relpe] {next-block relative pseudoentropy};
	\node (nbire) [pred,text width=4cm,below=1.1cm of nbhre] {next-block inaccessible\\ relative entropy};
	\node (nbpe) [pred, below=1.1cm of nbrpe] {next-block pseudoentropy gap};
	\node (nbie) [pred, below=1.1cm of nbire] {next-block inaccessible entropy};

	\path [fix] (hire) edge  node[fill=white] {Fix $\tG$} (relpe);
	\path [online] (hire) edge  node[fill=white] {Make online and split into blocks} (nbhre);
	\path [online] (relpe) edge node[fill=white] {Make online and split into blocks} (nbrpe);
	\path [fix] (nbhre) edge node[fill=white] {Fix $\sS$ for short blocks} (nbire);
	\path [equiv] (nbrpe) edge node[fill=white] {Assuming short blocks} (nbpe);
	\path [equiv] (nbire) edge node[fill=white] {Assuming flat distribution} (nbie);
\end{tikzpicture}
\caption{Relationships between hardness notions.}
\label{fig:notions-diagram}
\end{figure}

We remark that another interesting direction for future work is to find
a construction of universal one-way hash functions (UOWHFs) from one-way
functions that follows a similar template to the above constructions of PRGs
and SHCs.  There is now a construction of UOWHFs based on a variant of
inaccessible entropy~\cite{HHRVW10}, but it remains more complex and
inefficient than those of PRGs and SHCs.

\section{Preliminaries}

\paragraph{Notations.} For a tuple $x = (x_1,\dots,x_n)$, we write $x_{\leq i}$ for $(x_1,\dots,x_i)$,
and $x_{<i}$ for $(x_1,\dots,x_{i-1})$.

$\poly$ denotes the set of polynomial functions and $\negl$ the set of all
negligible functions: $\eps\in\negl$ if for all $p\in\poly$ and large enough
$n\in\bbN$, $\eps(n)\leq 1/p(n)$. We will sometimes abuse notations and write
$\poly(n)$ to mean $p(n)$ for some $p\in\poly$ and similarly for $\negl(n)$.

$\PPT{}$ stands for probabilistic polynomial time and can be either in the
uniform or non-uniform model of computation. All our results are stated as
uniform polynomial time oracle reductions and are thus meaningful in both
models.

For a random variable $X$ over $\cX$, $\Supp(X)\eqdef \{x\in\cX\,:\,
\Pr[X=x]>0\}$ denotes the support of $X$. A random variable is \emph{flat} if
it is uniform over its support. Random variables will be written with uppercase
letters and the associated lowercase letter represents a generic element from
its support.

\paragraph{Information theory.}

\begin{definition}[Entropy]
	For a random variable $X$ and $x\in\Supp(X)$, the \emph{sample entropy}
	(also called surprise) of $x$ is $\sent{x}{X}\eqdef \log(1/\pr{X=x})$.  The
	\emph{entropy} $\ent{X}$ of $X$ is the expected sample entropy:
	$\ent{X} \eqdef \ex[x\from X]{\sent{x}{X}}$.
\end{definition}

\begin{definition}[Conditional entropy]
	Let $(A, X)$ be a pair of random variables and consider $(a,x)\in\Supp(A,
	X)$, the \emph{conditional sample entropy} of $(a, x)$ is
	$\sent{a,x}{A|X}\eqdef \log(1/\pr{A=a\,|\,X=x})$ and the \emph{conditional
	entropy} of $A$ given $X$ is the expected conditional sample entropy:
	\begin{displaymath}
	\ent{A| X}\eqdef \ex[(a,x)\from (A,X)]{\log\frac{1}{\pr{A=a\,|\,X=x}}}\;.
	\end{displaymath}
\end{definition}

\begin{proposition}[Chain rule for entropy] Let $(A, X)$ be a pair of random
	variables, then $\ent{A, X} = \ent{A|X} + \ent{X}$ and for
	$(a,x)\in\Supp(A, X)$, $\sent{a,x}{A,X} = \sent{a,x}{A|X} + \sent{x}{X}$.
\end{proposition}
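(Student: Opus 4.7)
The plan is to prove the pointwise (sample entropy) identity first, and then derive the expectation form by integrating over $(A,X)$. Both are essentially one-line calculations from the definitions, so the challenge is mostly organizational rather than mathematical.

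For the sample entropy identity, I would start from the definition $\sent{a,x}{A,X} = \log(1/\pr{A=a,X=x})$ and apply the elementary identity $\pr{A=a,X=x} = \pr{A=a\mid X=x}\cdot\pr{X=x}$, which is valid for $(a,x)\in\Supp(A,X)$ (so that the conditioning event has positive probability). Taking logarithms and using $\log(uv) = \log u + \log v$ immediately yields
\[\sent{a,x}{A,X} = \log\frac{1}{\pr{A=a\mid X=x}} + \log\frac{1}{\pr{X=x}} = \sent{a,x}{A|X} + \sent{x}{X}.\]

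For the expectation form, I would take $\ex[(a,x)\from(A,X)]{\cdot}$ on both sides of the pointwise identity. The left-hand side becomes $\ent{A,X}$ by definition, and the first term on the right-hand side becomes $\ent{A|X}$ by the definition of conditional entropy. For the second term, I would note that $\sent{x}{X}$ depends only on $x$, so its expectation under $(A,X)$ equals its expectation under the marginal $X$, giving $\ex[x\from X]{\sent{x}{X}} = \ent{X}$. Combining yields $\ent{A,X} = \ent{A|X} + \ent{X}$.

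There is no real obstacle here; the only subtle point worth flagging is the restriction to $(a,x)\in\Supp(A,X)$, which ensures that both $\pr{A=a,X=x}>0$ and $\pr{X=x}>0$ so that the conditional probability and all logarithms are well-defined. Since the expectation in the definition of entropy is taken over draws from $(A,X)$, only points in $\Supp(A,X)$ ever contribute, and the marginalization of $\sent{x}{X}$ to $\Supp(X)$ is automatic.
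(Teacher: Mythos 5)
Your proof is correct and is exactly the standard argument this proposition rests on: the paper states it without proof, and your derivation (apply $\pr{A=a,X=x}=\pr{A=a\mid X=x}\pr{X=x}$, take logarithms for the sample identity, then take expectations over $(A,X)$ and marginalize the $\sent{x}{X}$ term) is the canonical one-line justification, with the support condition handled properly. Nothing further is needed.
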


\begin{definition}[Relative entropy\footnote{\emph{Relative entropy} is also
	commonly referred to as \emph{Kullback--Liebler divergence}, which explains
the standard $\KL$ notation. We prefer to use relative entropy to have more
uniformity across the notions discussed in this work.}]
	\label{def:kl}
	For a pair $(A, B)$ of random variables and $(a, b)\in\Supp(A, B)$ the
	\emph{sample relative entropy} ($\log$-probability ratio) is:
	\begin{displaymath}
		\skl{a}{A}{B}\eqdef \log\frac{\pr{A=a}}{\pr{B=a}}\;,
	\end{displaymath}
	and the \emph{relative entropy} of $A$ with respect to $B$ is the expected
	sample relative entropy:
	\begin{displaymath}
		\kl{A}{B}\eqdef \ex[a\from A]{\log\frac{\pr{A=a}}{\pr{B=a}}}\;.
	\end{displaymath}
\end{definition}

\begin{definition}[Conditional relative entropy]
	For pairs of random variables $(A, X)$ and $(B,Y)$, and
	$(a,x)\in\Supp(A, X)$,  the \emph{conditional sample relative entropy} is:
	\begin{displaymath}
		\skl{a, x}{A|X}{B|Y} \eqdef \log\frac{\pr{A = a|X = x}}{\pr{B = a|Y = x}}
		\,,
	\end{displaymath}
	and the \emph{conditional relative entropy} is:
	\begin{displaymath}
		\kl{A|X}{B|Y}\eqdef
		\ex[(a, x)\from (A, X)]{\log\frac{\pr{A = a|X = x}}{\pr{B = a|Y = x}}}
		\,.
	\end{displaymath}
\end{definition}

\begin{proposition}[Chain rule for relative entropy]
	\label{prop:kl-chain}
	For pairs of random variables $(X, A)$ and $(Y, B)$:
	\begin{displaymath}
		\kl{A, X}{B, Y} = \kl{A|X}{B|Y} + \kl{X}{Y}
		\,,
	\end{displaymath}
	and for $(a,x)\in\Supp(A, X)$:
	\begin{displaymath}
		\skl{a,x}{A, X}{B, Y} = \skl{a,x}{A|X}{B|Y} + \skl{x}{X}{Y}
		\,.
	\end{displaymath}
\end{proposition}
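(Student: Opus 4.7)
The plan is to prove the pointwise (sample) identity first and then obtain the expectation version by taking expectations, since the expectation statement is simply the average of the sample statement under $(a,x)\sim(A,X)$.

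For the sample identity, I would start from the definition of conditional probability, which gives the elementary factorizations
\[\pr{A=a, X=x} = \pr{A=a\,|\,X=x}\cdot \pr{X=x}\]
and identically $\pr{B=a, Y=x} = \pr{B=a\,|\,Y=x}\cdot \pr{Y=x}$, valid for $(a,x)\in\Supp(A,X)$ (and noting that the right-hand side of the second factorization is well defined because the absolute continuity assumption implicit in the definition of $\KL$ forces $\pr{Y=x}>0$ whenever $\pr{X=x}>0$, and similarly for the conditional distributions). Taking the ratio and then logarithms, and using $\log(uv)=\log u + \log v$, immediately yields
\[\skl{a,x}{A,X}{B,Y} = \skl{a,x}{A|X}{B|Y} + \skl{x}{X}{Y},\]
which is the second asserted equation.

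For the first equation, I would take the expectation of both sides of the sample identity under $(a,x)\sim(A,X)$. The left-hand side becomes $\kl{A,X}{B,Y}$ by definition. The first term on the right is $\kl{A|X}{B|Y}$, again by definition. For the second term, observe that $\skl{x}{X}{Y}$ depends only on $x$, so
\[\ex[(a,x)\sim(A,X)]{\skl{x}{X}{Y}} = \ex[x\sim X]{\skl{x}{X}{Y}} = \kl{X}{Y},\]
which is exactly what we wanted, using the marginal of $(A,X)$ on its second coordinate.

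There is no real obstacle here: the whole statement is a direct consequence of the multiplicative factorization of joint probabilities into marginals and conditionals, together with linearity of expectation. The only thing worth being slightly careful about is the handling of supports, i.e., verifying that every denominator encountered is strictly positive, which is automatic under the usual convention that $\KL$ is defined only when $\Supp(A)\subseteq\Supp(B)$ (and analogously for the conditional version).
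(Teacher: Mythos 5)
Your proof is correct: the paper states this chain rule as a standard preliminary fact without giving a proof, and your argument (factoring the joint probability into conditional times marginal to get the sample identity, then taking expectations over $(a,x)\from(A,X)$ and using linearity) is exactly the standard derivation the paper implicitly relies on. Your remark about supports matches the paper's convention that relative entropy is only considered when the relevant supports are contained, so nothing is missing.
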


\begin{proposition}[Data-processing inequality]
	\label{lemma:dec}
	Let $(X, Y)$ be a pair of random variables and let $f$ be a function defined
	on $\Supp(Y)$, then:
	\begin{displaymath}
		\kl{X}{Y}\geq \kl{f(X)}{f(Y)}
		\,.
	\end{displaymath}
\end{proposition}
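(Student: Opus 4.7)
The plan is to reduce the data-processing inequality to two applications of the chain rule for relative entropy (Proposition~\ref{prop:kl-chain}), together with the non-negativity of relative entropy. The idea is to consider the joint random variables $(X, f(X))$ and $(Y, f(Y))$: these admit two natural factorizations (conditioning on either coordinate), and the two factorizations interact differently with the fact that $f$ is deterministic.

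First I would apply the chain rule by peeling off the \emph{input} coordinate. Because $f$ is deterministic, the conditional distribution of $f(X)$ given $X = x$ is a point mass at $f(x)$, and likewise the conditional distribution of $f(Y)$ given $Y = x$ is a point mass at $f(x)$ (for every $x \in \Supp(X) \subseteq \Supp(Y)$, an inclusion implicit in the hypothesis that $\kl{X}{Y}$ is finite and that $f$ is defined on $\Supp(Y)$). Hence every summand in the expectation defining $\kl{f(X) \mid X}{f(Y) \mid Y}$ is $\log(1/1) = 0$, so this conditional relative entropy vanishes and the chain rule gives
\[
	\kl{f(X), X}{f(Y), Y} = \kl{X}{Y}.
\]

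Next I would apply the chain rule to the same joint distributions, but this time peeling off the \emph{output} coordinate:
\[
	\kl{X, f(X)}{Y, f(Y)} = \kl{X \mid f(X)}{Y \mid f(Y)} + \kl{f(X)}{f(Y)}.
\]
Permuting coordinates does not change a joint distribution, so combining the two displays yields $\kl{X}{Y} = \kl{X \mid f(X)}{Y \mid f(Y)} + \kl{f(X)}{f(Y)}$, and the desired inequality follows from the non-negativity of (conditional) relative entropy, itself a standard consequence of Jensen's inequality applied to the convex function $-\log$. I do not expect any genuine obstacle: the proof is pure chain-rule bookkeeping, with the only mild care needed being the support condition on $f$ discussed above.
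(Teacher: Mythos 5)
Your argument is correct, but note that the paper itself states Proposition~\ref{lemma:dec} as a standard preliminary fact and offers no proof of it, so there is no in-paper argument to compare against. Your route — expand $\kl{X,f(X)}{Y,f(Y)}$ by the chain rule in both orders, observe that $\kl{f(X)\mid X}{f(Y)\mid Y}=0$ because $f$ is deterministic, and then drop the nonnegative term $\kl{X\mid f(X)}{Y\mid f(Y)}$ — is the classical chain-rule proof (as in Cover--Thomas), and every step is sound given the paper's Proposition~\ref{prop:kl-chain} plus Gibbs' inequality (nonnegativity of each conditional divergence, which here is an average over $x\from X$ of genuine KL divergences between the conditional laws, well defined since $\Supp(X)\subseteq\Supp(Y)$). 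One small quibble: the support inclusion is not something to infer from finiteness of $\kl{X}{Y}$; it is built into the paper's definition of relative entropy (see the footnote to Definition~\ref{def:KL-hard-to-sample}), and if one instead adopts the convention that the divergence is $+\infty$ without it, the inequality is vacuous anyway. For contrast, the other standard proof avoids the joint distribution entirely: group the terms of $\kl{X}{Y}$ by the fibers $f^{-1}(z)$ and apply the log-sum inequality to each fiber, which gives $\kl{f(X)}{f(Y)}$ directly; that version needs no chain rule but invokes log-sum/Jensen fiberwise, whereas yours reuses machinery the paper has already set up, which is arguably the more natural choice in this context.
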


\begin{definition}[min relative entropy]
	Let $(X, Y)$ be a pair of random variables and $\klfail\in[0,1]$. We define
	$\klm{\klfail}{X}{Y}$ to be the quantile of level $\klfail$ of
	$\skl{x}{X}{Y}$, equivalently it is the smallest $\klgap\in\bbR$
	satisfying:
	\begin{displaymath}
		\pr[x\from X]{\skl{x}{X}{Y}\leq \klgap}\geq \klfail
		\;,
	\end{displaymath}
	and it is characterized by the following equivalence:
	\begin{displaymath}
	    \klm{\klfail}{X}{Y} > \klgap \iff
		\pr[x\from X]{\skl{x}{X}{Y}\leq \klgap}< \klfail
		\;.
	\end{displaymath}
\end{definition}

\paragraph{Block generators}

\begin{definition}[Block generator]
	An \emph{$m$-block generator} is a function
	$\sG:\zo^s\to\prod_{i=1}^m\zo^{\ell_i}$. $\sG_i(r)$ denotes the $i$-th
	block of $\sG$ on input $r$ and $|\sG_i|=\ell_i$ denotes the bit length of
	the $i$-th block.
\end{definition}

\begin{definition}[Online generator]
	An \emph{online $m$-block generator} is a function
	$\tG:\prod_{i=1}^m\zo^{s_i}\to\prod_{i=1}^m\zo^{\ell_i}$ such that for all
	$i\in[m]$ and $r\in\prod_{i=1}^m\zo^{s_i}$, $\tG_i(r)$ only depends on
	$r_{\leq i}$. We sometimes write $\tG_i(r_{\leq i})$ when the input blocks
	$i+1,\dots,m$ are unspecified.
\end{definition}

\begin{definition}[Support]
	The \emph{support} of a generator $\sG$ is the support of the random
	variable $\Supp\big(\sG(R)\big)$ for uniform input $R$. If $\sG$ is an
	$(m+1)$-block generator, and $\Pi$ is a binary relation, we say that $\sG$
	is \emph{supported on $\Pi$} if $\Supp\big(\sG_{\leq m}(R),
	\sG_{m+1}(R)\big)\subseteq \Pi$.
\end{definition}

When $\sG$ is an $(m+1)$-block generator supported on a binary relation $\Pi$,
we will often use the notation $\sG_{\w}\eqdef \sG_{m+1}$ to emphasize that
the last block corresponds to a witness for the first $m$ blocks.

\paragraph{Cryptography.}

\begin{definition}[One-way Function]
	Let $n$ be a security parameter, $t=t(n)$ and $\eps=\eps(n)$.  A function
	$f:\zo^n\to\zo^n$ is a \emph{$(t, \eps)$-one-way function} if:
	\begin{enumerate}
		\item For all time $t$ randomized algorithm $\sA$:
			$
		\pr[x\from U_n]{\sA\big(f(x)\big) \in f^{-1}\big(f(x)\big)} \leq \eps
		$, where $U_n$ is uniform over $\zo^n$.
	\item There exists a polynomial time algorithm $\sB$ such that
		$\sB(x)=f(x)$ for all $x\in\zo^n$.
	\end{enumerate}
	If $f$ is $(n^c,1/n^c)$-one-way for every $c\in\bbN$, we say that $f$ is
	\emph{(strongly) one-way}.
\end{definition}

\section{Search Problems and Hardness in Relative Entropy}

In this section, we first present the classical notion of hard-on-average
search problems and introduce the new notion of hardness in relative entropy.
We then relate the two notions by proving that average-case hardness implies
hardness in relative entropy.

\subsection{Search problems}

For a binary relation $\Pi\subseteq\zo^*\times\zo^*$, we write $\Pi(y, w)$ for
the predicate that is true iff $(y,w)\in\Pi$ and say that $w$ is
a \emph{witness} for the \emph{instance} $y$\footnote{We used the
	unconventional notation $y$ for the instance (instead of $x$) because our
relations will often be of the form $\Pi^f$ for some function $f$; in this case
an instance is some $y$ in the range of $f$ and a witness for $y$ is any
preimage $x\in f^{-1}(y)$.}.  To each relation $\Pi$, we naturally associate
(1) a \emph{search problem}: given $y$, find $w$ such that $\Pi(y,w)$ or state
that no such $w$ exist and (2) the \emph{decision problem} defined by the
language $L_\Pi \eqdef \{y\in\zo^*\,:\, \exists w\in\zo^*,\, \Pi(y,w)\}$.
$\FNP$ denotes the set of all relations $\Pi$ computable by a polynomial time
algorithm and such that there exists a polynomial $p$ such that
$\Pi(y,w)\Rightarrow |w|\leq p(|y|)$.  Whenever $\Pi\in\FNP$, the associated
decision problem $L_\Pi$ is in $\NP$. We now define average-case hardness.

\begin{definition}[distributional search problem]\label{defn:search}
	A \emph{distributional search problem} is a pair $(\Pi, Y)$ where
	$\Pi\subseteq\zo^*\times\zo^*$ is a binary relation and $Y$ is a random
	variable supported on $L_\Pi$.

	The problem $(\Pi, Y)$ is \emph{(t, $\invp$)-hard} if $\Pr\big[\Pi\big(Y,
	\sA(Y)\big)\big]\leq \invp$ for all time $t$ randomized algorithm $\sA$,
	where the probability is over the distribution of $Y$ and the randomness of
	$\sA$.

\end{definition}

\begin{example}
	\label{ex:invf}
	For $f:\zo^n\to\zo^n$, the problem of inverting $f$ is the search problem
	associated with the relation $\Pi^f\eqdef\{(f(x),
	x)\,:\,x\in\zo^n\}$.
	If $f$ is a $(t, \eps)$-one-way function, then the distributional search
	problem $\big(\Pi^f, f(X)\big)$ of inverting $f$ on a uniform random input
	$X\in\zo^n$ is $(t, \eps)$-hard.
\end{example}

\begin{remark}
	Consider a distributional search problem $(\Pi, Y)$. Without loss of
	generality, there exists a (possibly inefficient) two-block generator
	$\sG=(\sG_1,\sG_\w)$
	supported on $\Pi$ such that $\sG_1(R) = Y$ for uniform input $R$. If
	$\sG_\w$ is polynomial-time computable, it is easy to see that the search
	problem $\big(\Pi^{\sG_1}, \sG_1(R)\big)$ is at least as hard as $(\Pi,
	Y)$.  The advantage of writing the problem in this ``functional'' form is that
	the distribution $(\sG_1(R), R)$ over (instance, witness) pairs is flat,
	which is a necessary condition to relate hardness to inaccessible entropy
	(see Theorem~\ref{thm:nb-inacc}).

	Furthermore, if $\sG_1$ is also polynomial-time computable and $(\Pi, Y)$
	is $(\poly(n), \negl(n))$-hard, then $R\mapsto \sG_1(R)$ is a one-way
	function. Combined with the previous example, we see that the existence of
	one-way functions is equivalent to the existence of $(\poly(n),
	\negl(n))$-hard search problems for which (instance, witness) pairs can be
	efficiently sampled.
\end{remark}

\subsection{Hardness in relative entropy}

Instead of considering an adversary directly attempting to solve a search
problem $(\Pi, Y)$, the adversary in the definition of hardness in relative
entropy comprises a pair of algorithm $(\tG, \sS)$ where $\tG$ is a two-block
generator outputting valid (instance, witness) pairs for $\Pi$ and $\sS$ is
a \emph{simulator} for $\tG$: given an instance $y$, the goal of $\sS$ is to
output randomness $r$ for $\tG$ such that $\tG_1(r) = y$. Formally, the
definition is as follows.

\begin{definition}[hardness in relative entropy]
	\label{def:kl-hard}
	Let $(\Pi, Y)$ be a distributional search problem. We say that $(\Pi, Y)$
	has \emph{hardness $(t, \klgap)$ in relative entropy} if:
	\begin{displaymath}
		\kl{\tR, \tG_1(\tR)}{\sS(Y), Y} > \klgap
		\;,
	\end{displaymath}
	for all pairs $(\tG, \sS)$ of time $t$ algorithms where $\tG$ is
	a two-block generator supported on $\Pi$ and $\tR$ is uniform randomness
	for $\tG_1$.  Similarly, for $\klfail\in[0,1]$, $(\Pi, Y)$ has
	\emph{hardness $(t, \klgap)$ in $\klfail$-min relative entropy} if for all
	such pairs:
	\begin{displaymath}
		\klm{\klfail}{\tR, \tG_1(\tR)}{\sS(Y), Y} > \klgap
		\;.
	\end{displaymath}
\end{definition}

Note that a pair $(\tG, \sS)$ achieves a relative entropy of zero in
Definition~\ref{def:kl-hard} if $\tG_1(R)$ has the same distribution as $Y$ and
if $\tG_1\big(\sS(y)\big) = y$ for all $y\in\Supp(Y)$. In this case, writing
$\tG_\w \eqdef \tG_2$, we have that $\tG_\w\big(\sS(Y)\big)$ is a valid witness
for $Y$ since $\tG$ is supported on $\Pi$. 

More generally, the composition $\tG_\w\circ \sS$ solves the search problem
$(\Pi, Y)$ whenever $\tG_1\big(\sS(Y)\big) = Y$.  When the relative entropies
in Definition~\ref{def:kl-hard} are upper-bounded, we can lower bound the
probability of the search problem being solved (Lemma~\ref{lem:kl-hard}). This
immediately implies that hard search problems are also hard in relative
entropy.

\begin{theorem}
	\label{thm:kl-hard}
	Let $(\Pi, Y)$ be a distributional search problem. If $(\Pi, Y)$ is $(t,
	\invp)$-hard, then it has hardness $(t', \klgap')$ in relative entropy and $(t',
	\klgap'')$ in $\klfail$-min relative entropy for every
	$\klfail\in[0,1]$ where $t' = \Omega(t)$,\footnote{For the theorems in this
		paper that relate two notions of hardness, the notation $t'=\Omega(t)$
	means that there exists a constant $C$ depending \emph{only} on the
computational model such that $t'\geq C\cdot t$.} $\klgap' = \log(1/\eps)$ and
$\klgap'' = \log(1/\invp)-\log(1/\klfail)$.
\end{theorem}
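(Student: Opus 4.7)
The plan is to prove the contrapositive of each statement: from a pair $(\tG, \sS)$ of time-$t'$ algorithms violating the claimed lower bound on (min) relative entropy, construct a time-$t$ algorithm $\sA$ that solves $(\Pi, Y)$ with probability exceeding $\invp$, contradicting $(t, \invp)$-hardness.

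The search-problem adversary is the natural one suggested by the discussion preceding the theorem: $\sA(y) \eqdef \tG_\w\big(\sS(y)\big)$, which runs in time $O(t')$. Since $\tG$ is supported on $\Pi$, the pair $\big(\tG_1(r), \tG_\w(r)\big)$ lies in $\Pi$ for every $r$; hence whenever $\tG_1\big(\sS(y)\big) = y$, the output $\sA(y) = \tG_\w\big(\sS(y)\big)$ is a valid witness for $y$. Therefore
\[
    \pr{\Pi\big(Y, \sA(Y)\big)} \;\geq\; \pr{\tG_1\big(\sS(Y)\big) = Y}\,,
\]
and it suffices to lower bound this ``matching'' probability in terms of the (min) relative entropy between $P \eqdef \big(\tR, \tG_1(\tR)\big)$ and $Q \eqdef \big(\sS(Y), Y\big)$, applied to the event $E \eqdef \{(r,y) : \tG_1(r) = y\}$. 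By construction $P(E) = 1$ and $Q(E)$ equals exactly the matching probability above.

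For the relative entropy bound, I would invoke the data-processing inequality (Proposition~\ref{lemma:dec}) with the indicator of $E$: this reduces $\kl{P}{Q}$ to $\kl{1}{Q(E)} = \log\big(1/Q(E)\big)$, so $\kl{P}{Q} \leq \log(1/\invp)$ forces $Q(E) \geq \invp$ and hence $\pr{\Pi(Y, \sA(Y))} \geq \invp$, contradicting $(t, \invp)$-hardness. For the min relative entropy bound, I would use a direct change-of-measure computation in place of data processing: let $B \eqdef \{(r,y) : \skl{r,y}{P}{Q} \leq \klgap''\}$; by the characterization of $\klm{\klfail}{\cdot}{\cdot}$, $\klm{\klfail}{P}{Q}\leq \klgap''$ entails $P(B) \geq \klfail$, so $P(E \cap B) \geq \klfail$ as well. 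By definition of $\skl{\cdot}{P}{Q}$, every $(r,y)\in B$ satisfies $Q(r,y) \geq P(r,y)\cdot 2^{-\klgap''}$, and summing over $E\cap B$ gives
\[
    Q(E) \;\geq\; Q(E\cap B) \;\geq\; P(E\cap B)\cdot 2^{-\klgap''} \;\geq\; \klfail \cdot 2^{-\klgap''} \;=\; \invp,
\]
again contradicting hardness.

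The argument is essentially a one-line application of data processing (or its ``min'' analogue) once the correct adversary $\sA$ and event $E$ are identified; I do not anticipate a serious obstacle. The only mildly delicate point is tracking strict versus non-strict inequalities between the hardness definition (which uses $\leq \invp$) and the relative-entropy definition (which uses $> \klgap$); this can be absorbed either by noting that the above argument in fact gives $Q(E) \geq \invp$, so any strict violation of the KL bound yields a strict violation of hardness, or by observing that equality in data processing is a degenerate case that can be ruled out up to a negligible slack.
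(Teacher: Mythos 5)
Your proposal is correct and follows essentially the same route as the paper: the same adversary $\sA(y)=\tG_\w\big(\sS(y)\big)$, the same reduction of its success probability to $\pr{\tG_1(\sS(Y))=Y}$, and then a bound of this matching probability in terms of the (min) relative entropy. The only difference is cosmetic packaging of the final estimate—you invoke the data-processing inequality on the indicator of the matching event and a direct change of measure on the good event, where the paper writes the matching probability as $\ex{2^{-\KL^*}}$ and applies Jensen's and Markov's inequalities, which are equivalent computations (and the strict-vs-nonstrict boundary issue you flag is likewise glossed over in the paper).
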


\begin{remark}
	As we see, a ``good'' simulator $\sS$ for a generator $\tG$ is one for
	which $\tG_1\big(\sS(Y)\big) = Y$ holds often. It will be useful in
	Section~\ref{sec:iekl} to consider simulators $\sS$ which are allowed to
	fail by outputting a failure string $r\notin\Supp(\tR)$, (e.g.~$r=\bot$)
	and adopt the convention that $\tG_1(r)=\bot$ whenever $r\notin\Supp(\tR)$.
	With this convention, we can without loss of generality add the requirement
	that $\tG_1\big(\sS(Y)\big)=Y$ whenever $\sS(Y)\in\Supp(\tR)$: indeed,
	$\sS$ can always check that it is the case and if not output a failure
	symbol. For such a simulator $\sS$, observe that  for all $r\in\Supp(\tR)$,
	the second variable on both sides of the relative entropy in
	Definition~\ref{def:kl-hard} is obtained by applying $\tG_1$ on the first
	variable and can thus be dropped, leading to a simpler definition of
	hardness in relative entropy: $\kl{\tR}{\sS(Y)} > \klgap$.
\end{remark}

Theorem~\ref{thm:kl-hard} is an immediate consequence of the following lemma.

\begin{lemma}
	\label{lem:kl-hard}
	Let $(\Pi, Y)$ be a distributional search problem and $(\tG, \sS$) be
	a pair of algorithms with $\tG=(\tG_1,\tG_\w)$ a two-block generator
	supported on $\Pi$.  Define the linear-time oracle algorithm
	$\sA^{\tG_\w, \sS}(y) \eqdef \tG_\w(\sS(y))$.
	For $\klgap\in\bbR^+$ and
	$\klfail\in[0,1]$:
	\begin{enumerate}
		\item If $\kl{\tR, \tG_1(\tR)}{\sS(Y), Y}\leq\klgap$ then $\pr{\Pi(Y,
			\sA^{\tG_\w, \sS}(Y))}\geq 1/2^\klgap$.
		\item If $\klm{\klfail}{\tR, \tG_1(\tR)}{\sS(Y), Y}\leq\klgap$ then
			$\pr{\Pi(Y, \sA^{\tG_\w, \sS}(Y))}\geq \klfail/2^\klgap$.
	\end{enumerate}
\end{lemma}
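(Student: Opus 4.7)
The plan is to lower-bound the success probability of the algorithm $\sA^{\tG_\w,\sS}(Y)=\tG_\w(\sS(Y))$ by a change-of-measure argument from the divergence hypothesis. The crucial observation is that, because $\tG$ is supported on $\Pi$, we have $\Pi\bigl(\tG_1(r),\tG_\w(r)\bigr)$ for every $r\in\Supp(\tR)$; consequently, whenever $\tG_1(\sS(Y))=Y$ (and $\sS(Y)\in\Supp(\tR)$) we get $\Pi(Y,\sA^{\tG_\w,\sS}(Y))$. Letting $T\eqdef\{(r,y):\tG_1(r)=y\}$ and $p\eqdef\Pr[(\sS(Y),Y)\in T]$, it therefore suffices to show $p\geq 2^{-\klgap}$ for part~(1) and $p\geq \klfail/2^\klgap$ for part~(2). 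Note that $T$ contains the entire support of $(\tR,\tG_1(\tR))$, so the ``source'' distribution assigns probability $1$ to $T$.

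For part~(1), I would apply the data-processing inequality (Proposition~\ref{lemma:dec}) to the indicator function $\mathbf{1}_T$, yielding
\[
\kl{\mathbf{1}_T(\tR,\tG_1(\tR))}{\mathbf{1}_T(\sS(Y),Y)}\;\leq\;\kl{\tR,\tG_1(\tR)}{\sS(Y),Y}\;\leq\;\klgap.
\]
The left-hand side is the KL divergence between $\mathrm{Bernoulli}(1)$ and $\mathrm{Bernoulli}(p)$, which evaluates to $\log(1/p)$ (with the standard convention $0\log 0 = 0$). Rearranging gives the desired bound $p\geq 2^{-\klgap}$.

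For part~(2), I would argue directly from the definition of sample relative entropy rather than via data processing. Define
\[
A\;\eqdef\;\bigl\{(r,y)\in\Supp(\tR,\tG_1(\tR)):\skl{(r,y)}{\tR,\tG_1(\tR)}{\sS(Y),Y}\leq\klgap\bigr\}.
\]
The hypothesis $\klm{\klfail}{\tR,\tG_1(\tR)}{\sS(Y),Y}\leq\klgap$, together with the characterization of $\KLM$ in the preliminaries, gives $\Pr[(\tR,\tG_1(\tR))\in A]\geq\klfail$. For each $(r,y)\in A$, the definition of sample relative entropy yields $\Pr[(\sS(Y),Y)=(r,y)]\geq 2^{-\klgap}\Pr[(\tR,\tG_1(\tR))=(r,y)]$. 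Summing this inequality over $(r,y)\in A$ produces
\[
\Pr[(\sS(Y),Y)\in A]\;\geq\;2^{-\klgap}\cdot\Pr[(\tR,\tG_1(\tR))\in A]\;\geq\;\klfail/2^\klgap,
\]
and since $A\subseteq T$ by construction, this bounds $p$ as required.

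The argument is essentially a routine application of change-of-measure; the only mild subtlety to handle cleanly is the bookkeeping around the convention that $\sS$ may fail (outputting $r\notin\Supp(\tR)$ and hence $\tG_1(r)=\bot\neq Y$), which simply means such failure events do not contribute to $p$ but also do not affect the upper bound on the divergence. The ``hard'' step, to the extent there is one, is recognizing that the $\klgap$ loss factor $2^{-\klgap}$ is exactly the exponential slack allowed by a relative-entropy bound, visible either through $\kl{\mathrm{Ber}(1)}{\mathrm{Ber}(p)}=\log(1/p)$ (part~1) or through the pointwise $\log$-probability-ratio definition (part~2).
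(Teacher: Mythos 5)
Your proof is correct, and its core is the same change-of-measure idea as the paper's: both arguments reduce the success of $\sA^{\tG_\w,\sS}$ to the event $\tG_1(\sS(Y))=Y$ (using that $\tG$ is supported on $\Pi$) and then transfer probability mass from $(\tR,\tG_1(\tR))$ to $(\sS(Y),Y)$ at an exponential cost governed by the sample relative entropy. The difference is in how claim (1) is packaged. The paper proves the single identity $\pr{\tG_1(\sS(Y))=Y}\geq \ex[r\from\tR]{2^{-\skl{r,\tG_1(r)}{\tR,\tG_1(\tR)}{\sS(Y),Y}}}$ and then derives claim (1) by Jensen's inequality and claim (2) by restricting to the event that the sample relative entropy is at most $\klgap$, whereas you obtain claim (1) by applying the data-processing inequality (Proposition~\ref{lemma:dec}) to the indicator of $T=\{(r,y):\tG_1(r)=y\}$, so the bound reads $\log(1/p)=\kl{\mathrm{Ber}(1)}{\mathrm{Ber}(p)}\leq\klgap$; your claim (2) --- a pointwise change of measure on the good set $A$ followed by summation --- is the paper's Markov/restriction step in different words. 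What each buys: the paper's single expectation identity treats both claims uniformly and is the form that carries over verbatim to the witness-hardness variant in Appendix~\ref{app:proof}, while your indicator/DPI route for (1) is conceptually clean (it isolates exactly the binary success event and leans on an already-stated proposition) at the cost of handling the two claims by separate arguments. Your bookkeeping about failure outputs $\sS(Y)\notin\Supp(\tR)$ matches the paper's convention and is handled correctly, since in claim (2) your set $A$ lies inside $\Supp(\tR,\tG_1(\tR))$, and in claim (1) such outputs cannot land in $T$ once $\tG_1$ of a failure string is read as $\bot\neq Y$.
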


\begin{proof}
	We have:
	\begin{align*}
		\pr{\Pi\big(Y, \sA^{\tG_\w, \sS}(Y)\big)}
		&= \pr{\Pi(Y, \tG_\w(\sS(Y)))}\\
		&\geq \pr{\tG_1(\sS(Y)) = Y}\tag{$\tG$ is supported on $\Pi$}\\
		&= \sum_{r\in\Supp(\tR)} \pr{\sS(Y)=r\wedge Y=\tG_1(r)}\\
		&=\ex[r\from\tR]{\frac{\pr{\sS(Y)=r\wedge Y=\tG_1(r)}}{\pr{\tR=r}}}\\
		&= \ex[\substack{r\from \tR\\y\from \tG_1(r)}] {2^{-\skl{r, y}{\tR,
		\tG_1(\tR)}{\sS(Y), Y}}}
		\,.
	\end{align*}
	Now, the first claim follows by Jensen's inequality (since $x\mapsto
	2^{-x}$ is convex) and the second claim follows by Markov' inequality when
	considering the event that the sample relative entropy is smaller than
	$\klgap$ (which occurs with probability at least $\klfail$ by assumption).
\end{proof}

\paragraph{Relation to relative pseudoentropy.}

In \cite{VZ12}, the authors introduced the notion of relative
pseudoentropy\footnote{As already mentioned in the introduction, this notion
was in fact called ``$\KL$-hardness for sampling'' in \cite{VZ12} but we rename
it here to unify the terminology between the various notions discussed here.}:
for jointly distributed variables $(Y, W)$, $W$ has relative pseudoentropy given $Y$
if it is hard for a polynomial time adversary to approximate---measured in
relative entropy---the conditional distribution $W$ given $Y$.  Formally:

\begin{definition}[relative pseudoentropy, Def.~3.4 in \cite{VZ12}]
	Let $(Y, W)$ be a pair of random variables, we say that $W$ has
	\emph{relative pseudoentropy $(t, \klgap)$ given $Y$} if for all time $t$
	randomized algorithm $\sS$, we have:
	\begin{displaymath}
		\kl{Y, W}{Y, \sS(Y)}>\klgap
		\,.
	\end{displaymath}
\end{definition}

As discussed in Section~\ref{sec:nbpe}, it was shown in \cite{VZ12} that if
$f:\zo^n\to\zo^n$ is a one-way function, then $\big(f(X), X_1,\dots,X_n)$ has
	next-bit pseudoentropy for uniform $X\in\zo^n$ (see
	Theorem~\ref{thm:VZ-intro}). The first step in proving this result was to
	prove that $X$ has relative pseudoentropy given $f(X)$
	(see Lemma~\ref{lem:OWF-KL-hard-to-sample-intro}).

We observe that when $(Y, W)$ is of the form $(f(X), X)$ for some function
$f:\zo^n\to\zo^n$ and variable $X$ over $\zo^n$, then relative pseudoentropy is
implied by hardness in relative entropy by simply fixing $\tG$ to be the
``honest sampler'' $\tG(X) = (f(X), X)$. Indeed, in this case we have:
\begin{displaymath}
	\kl{X, \tG_1(X)}{\sS(Y), Y} = \kl{X, f(X)}{\sS(Y), Y}
	\,.
\end{displaymath}
We can thus recover Lemma~\ref{lem:OWF-KL-hard-to-sample-intro} as a direct
corollary of Theorem~\ref{thm:kl-hard}.

\begin{corollary}
	Consider a function $f:\zo^n\to\zo^n$ and define $\Pi^f \eqdef \{(f(x),
	x):x\in\zo^n\}$ and $Y \eqdef f(X)$ for $X$ uniform over $\zo^n$. If $f$ is
	$(t,\eps)$-one-way, then $(\Pi^f,Y)$ has hardness
	$\big(t',\log(1/\eps)\big)$ in relative entropy and $X$ has relative
	pseudoentropy $\big(t',\log(1/\eps)\big)$ given $Y$ with $t'=\Omega(t)$.
\end{corollary}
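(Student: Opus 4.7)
The plan is to derive both conclusions almost immediately from Theorem~\ref{thm:kl-hard} and the observation, already made in the paragraph preceding the corollary, that fixing $\tG$ to be the honest sampler reduces hardness in relative entropy to relative pseudoentropy.

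First I would invoke Example~\ref{ex:invf}: since $f$ is a $(t,\eps)$-one-way function, the distributional search problem $(\Pi^f, f(X))$ of inverting $f$ on a uniformly random input is $(t,\eps)$-hard. Applying Theorem~\ref{thm:kl-hard} to this search problem then directly yields that $(\Pi^f, Y)$ has hardness $(t', \log(1/\eps))$ in relative entropy for some $t' = \Omega(t)$, giving the first claim.

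For the second claim, I would take the ``honest sampler'' as the two-block generator, namely $\tG(x) \eqdef (f(x), x)$, which is supported on $\Pi^f$ by construction. With this choice, the uniform randomness $\tR$ of $\tG$ has the same distribution as $X$, and $\tG_1(\tR) = f(X) = Y$. Plugging this into the just-established hardness in relative entropy, for every time $t'$ algorithm $\sS$ we obtain
\begin{displaymath}
  \kl{X, f(X)}{\sS(Y), Y} = \kl{\tR, \tG_1(\tR)}{\sS(Y), Y} > \log(1/\eps).
\end{displaymath}
Since $(X, f(X))$ and $(f(X), X) = (Y, W)$ carry the same information (swapping the two coordinates is a bijection, so relative entropy is preserved by the data-processing inequality applied in both directions), this is exactly the statement that $X$ has relative pseudoentropy $(t', \log(1/\eps))$ given $Y$.

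There is essentially no obstacle here: the corollary is a direct unpacking of Theorem~\ref{thm:kl-hard} plus the observation that the honest sampler realizes the definition of relative pseudoentropy. The only mild point to be careful about is that Theorem~\ref{thm:kl-hard} produces a time bound $t'$ on the pair $(\tG, \sS)$, and we must note that since $\tG$ is fixed (and computable in time comparable to $f$), the quantification reduces to all time $t'$ simulators $\sS$, which is the form required by the definition of relative pseudoentropy.
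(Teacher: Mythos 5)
Your proposal is correct and follows essentially the same route as the paper, which derives the corollary directly from Example~\ref{ex:invf} and Theorem~\ref{thm:kl-hard} together with the observation (made in the paragraph preceding the corollary) that fixing $\tG$ to be the honest sampler $\tG(x)=(f(x),x)$ turns hardness in relative entropy into relative pseudoentropy. Your added remarks on the coordinate swap preserving relative entropy and on the time bound for the fixed honest generator are minor bookkeeping points that the paper leaves implicit, and they are handled correctly.
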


\paragraph{Witness hardness in relative entropy.}

We also introduce a relaxed notion of hardness in relative entropy called
witness hardness in relative entropy.  In this notion, we further require
$(\tG, \sS)$ to approximate the joint distribution of (instance, witness) pairs
rather than only instances.  For example, the problem of inverting
a function $f$ over a random input $X$ is naturally associated with the
distribution $\big(f(X), X\big)$.  The relaxation in this case is analogous to
the notion of \emph{distributional one-way function} for which the adversary is
required to approximate the uniform distribution over preimages.

\begin{definition}[witness hardness in relative entropy]
	\label{def:dist-hard}
	Let $\Pi$ be a binary relation and $(Y, W)$ be a pair of random variables
	supported on $\Pi$. We say that $(\Pi, Y, W)$ has \emph{witness hardness $(t, \klgap)$
	in relative entropy} if for all pairs of time $t$
	algorithms $(\tG, \sS)$ where $\tG$ is a two-block generator supported on
	$\Pi$, for uniform $\tR$:
	\begin{displaymath}
		\kl{\tR, \tG_1(\tR), \tG_\w(\tR)}{\sS(Y), Y, W} > \klgap\;.
	\end{displaymath}
	Similarly, for $\klfail\in[0,1]$, $(\Pi, Y, W)$ has \emph{witness hardness
	$(t, \klgap)$ in $\klfail$-min relative entropy}, if for all such pairs:
	\begin{displaymath}
		\klm{\klfail}{\tR, \tG_1(\tR),
		\tG_\w(\tR)}{\sS(Y), Y, W} > \klgap\;.
	\end{displaymath}
\end{definition}

We introduced hardness in relative entropy first, since it is the notion which
is most directly obtained from the hardness of distribution search problems.
Observe that by the data processing inequality for relative entropy
(Proposition~\ref{lemma:dec}), dropping the third variable on both sides of the
relative entropies in Definition~\ref{def:dist-hard} only decreases them.
Hence, hardness in relative entropy implies witness hardness as stated in
(Theorem~\ref{thm:witness-kl-hard}). As we will see in Section~\ref{sec:iekl}
witness hardness in relative entropy is the ``correct'' notion to obtain
inaccessible entropy from: it is in fact equal to inaccessible entropy up to
$1/\poly$ losses.

\begin{theorem}
	\label{thm:witness-kl-hard}
	Let $\Pi$ be a binary relation  and $(Y, W)$ be a pair of random variables
	supported on $\Pi$.  If $(\Pi, Y)$ is $(t, \eps)$-hard, then $(\Pi, Y, W)$
	has witness hardness $(t', \klgap')$ in relative entropy and $(t',
	\klgap'')$ in $\klfail$-min relative entropy for every
	$\klfail\in[0,1]$ where $t' = \Omega(t)$, $\klgap' = \log(1/\eps)$ and
	$\klgap'' = \log(1/\invp)-\log(1/\klfail)$.
\end{theorem}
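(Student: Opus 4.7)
My plan is to prove Theorem~\ref{thm:witness-kl-hard} by mimicking the proof of Theorem~\ref{thm:kl-hard} via a three-variable analog of Lemma~\ref{lem:kl-hard}. The key structural observation is that since $\tG$ is supported on $\Pi$, any input $r \in \Supp(\tR)$ satisfying $\tG_1(r) = Y$ automatically makes $\tG_\w(r)$ a valid witness for $Y$; a fortiori this holds whenever $\big(\tG_1(r), \tG_\w(r)\big) = (Y, W)$.

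Concretely, I would fix any pair $(\tG, \sS)$ of time-$t$ algorithms as in Definition~\ref{def:dist-hard} and consider the oracle algorithm $\sA^{\tG_\w, \sS}(y) \eqdef \tG_\w(\sS(y))$, which runs in time $t' = \Omega(t)$. Using the support condition and rewriting the success probability as a sum over preimages $r \in \Supp(\tR)$ yields
\begin{align*}
	\pr{\Pi\big(Y, \sA^{\tG_\w, \sS}(Y)\big)}
	&\geq \pr{\tG_1(\sS(Y)) = Y \wedge \tG_\w(\sS(Y)) = W}\\
	&= \sum_{r\in\Supp(\tR)} \pr{\sS(Y)=r \wedge Y = \tG_1(r) \wedge W = \tG_\w(r)}\\
	&= \ex[r \from \tR]{2^{-\skl{r,\tG_1(r),\tG_\w(r)}{\tR,\tG_1(\tR),\tG_\w(\tR)}{\sS(Y),Y,W}}},
\end{align*}
exactly paralleling the computation in Lemma~\ref{lem:kl-hard}.

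To conclude in the expected case, Jensen's inequality applied to the convex map $x\mapsto 2^{-x}$ gives a lower bound of $2^{-\kl{\tR,\tG_1(\tR),\tG_\w(\tR)}{\sS(Y),Y,W}}$; combined with $(t, \eps)$-hardness this forces $\klgap' \geq \log(1/\eps)$. For the $\klfail$-min case, if $\klm{\klfail}{\tR,\tG_1(\tR),\tG_\w(\tR)}{\sS(Y),Y,W} \leq \klgap''$, then with probability at least $\klfail$ over $r\from\tR$ the sample relative entropy is at most $\klgap''$; restricting the above expectation to these terms gives $\pr{\Pi(Y,\sA(Y))}\geq \klfail\cdot 2^{-\klgap''}$, whence hardness forces $\klgap'' \geq \log(1/\eps) - \log(1/\klfail)$.

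There is essentially no obstacle, as the proof is a direct adaptation of Lemma~\ref{lem:kl-hard}: the only new ingredient is strengthening the intermediate event ``$\tG_1(\sS(Y))=Y$'' to its conjunction with ``$\tG_\w(\sS(Y))=W$'', which remains a sufficient event for $\sA$ to produce a valid witness. For the expected case, an alternative route is to invoke Theorem~\ref{thm:kl-hard} and then apply the data-processing inequality (Proposition~\ref{lemma:dec}) for the projection onto the first two coordinates, as hinted in the discussion preceding the theorem; however, the direct calculation above is more uniform since it also handles the $\klfail$-min case, where sample relative entropy is not in general monotone under projection.
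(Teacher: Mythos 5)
Your proposal is correct and follows essentially the same route as the paper's proof in Appendix~\ref{app:proof}: the same oracle algorithm $\sA^{\tG_\w,\sS}(y)=\tG_\w(\sS(y))$, the same lower bound of its success probability by an expectation of $2^{-\skl{\cdot}{\tR,\tG_1(\tR),\tG_\w(\tR)}{\sS(Y),Y,W}}$ (your direct strengthening to the event $\{\tG_1(\sS(Y))=Y \wedge \tG_\w(\sS(Y))=W\}$ yields the identical expectation as the paper's sum over witnesses), and the same use of Jensen for the expected case and of the quantile/Markov argument for the $\klfail$-min case. Your closing remark about why the data-processing shortcut fails for $\KLM$ also matches the paper's own remark.
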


\begin{remark}
	The data processing inequality does not hold exactly for $\KLM$, hence the
	statement about $\klfail$-min relative entropy in
	Theorem~\ref{thm:witness-kl-hard} does not follow with the claimed
	parameters in a black-box manner from Theorem~\ref{thm:kl-hard}. However,
	an essentially identical proof given in Appendix~\ref{app:proof} yields the
	result.
\end{remark}

\section{Inaccessible Entropy and Hardness in Relative Entropy}
\label{sec:iekl}

In this section, we relate our notion of witness hardness in relative entropy
to the inaccessible entropy definition of \cite{HRVW16}.  Roughly speaking, we
``split'' the relative entropy into blocks and obtain the intermediate notion
of next-block inaccessible relative entropy (Section~\ref{sec:rs}) which we then
relate to inaccessible entropy (Section~\ref{sec:ie}). Together, these results
show that if $f$ is a one-way function, the generator $\sG^f(X)
= \big(f(X)_1,\dots,f(X)_n, X\big)$ has superlogarithmic inaccessible entropy.

\subsection{Next-block hardness and rejection sampling}
\label{sec:rs}

For an online (adversarial) generator $\tG$, it is natural to consider
simulators $\sS$ that also operate in an online fashion. That is:

\begin{definition}[online simulator]
	Let $\tG:\prod_{i=1}^m\zo^{s_i}\to\prod_{i=1}^m\zo^{\ell_i}$ be an online
	$m$-block generator. An \emph{online simulator} for $\tG$ is a PPT
	algorithm $\sS$ such that for all
	$y=(y_1,\dots,y_m)\in\prod_{i=1}^m\zo^{\ell_i}$, defining inductively
	$\hr_i \eqdef \sS(\hr_{<i}, y_i)\in\zo^{s_i}$, we have for all $i\in[m]$:
	\begin{displaymath}
		\tG_i(\hr_{\leq i}) = y_i \quad\mathrm{or}\quad  \hr_i = \bot\,.
	\end{displaymath}

		The \emph{running time} of $\sS$ is the total amount of time required
		to compute $\hr_1,\dots,\hr_m$.
\end{definition}

The goal of such an online simulator $\sS$ is to ensure that the distribution
of $\hR_i = \sS( \hr_{<i}, y_i)$ is close to that of $\tR_i
| \tR_{<i}=\hr_{<i}, \tY_i = y_i$ where $(\tY_1,\dots, \tY_m)\eqdef
\tG(\tR_{\leq m})$ for uniformly random $(\tR_1,\dots, \tR_m)$. Equivalently,
$\hR_i$ should be close to uniform on $\{\hr_i\,:\, \tG_i(\hr_{\leq i})=y_i\}$.
Measuring closeness with relative entropy, we have:


\begin{definition}[next-block hardness in relative entropy]
	\label{def:nb-hire}
	The joint distribution $Y=(Y_1, \dots, Y_{m})$ has \emph{next-block
	hardness $(t, \klgap)$ in relative entropy} if the following holds for
	every time $t$ online $m$-block generator $\tG$ and every time $t$ online
	simulator $\sS$ for $\tG$.

	Write $\tY_{\leq m}\eqdef\tG(\tR_{\leq m})$ for uniform $\tR_{\leq m}$, and
	define inductively $\hR_i \eqdef \sS(\hR_{<i}, Y_i)$. Then we require:
	\begin{displaymath}
		\sum_{i=1}^{m} \kl{\tR_i, \tY_i|\tR_{<i}, \tY_{<i}}{\hR_i, Y_i|\hR_{<i},
		Y_{<i}}> \klgap
		\,.
	\end{displaymath}

	Similarly, for $\klfail\in[0,1]$, we say that $(Y_1, \dots, Y_{m})$ has
	\emph{next-block hardness $(t,\klgap)$ in $\klfail$-min relative entropy}
	if, with the same notations as above:
	\begin{displaymath}
		\pr[\substack{r_{\leq m}\from\tR_{\leq m}\\y_{\leq m}\from
			\tG(r_{\leq m})}]{\sum_{i=1}^{m}
		\skl{y_i,r_{<i},y_{<i}}{\tR_i,\tY_i|\tR_{<i}, \tY_{<i}}{\hR_i, Y_i|\hR_{<i},
		Y_{<i}} \leq \klgap} < \klfail
		\,.
	\end{displaymath}
\end{definition}

Observe that using the chain rule for relative entropy, the sum of relative
entropies appearing in Definition~\ref{def:nb-hire} is exactly equal to the
relative entropies appearing in Definition~\ref{def:kl-hard}. Since,
furthermore considering an online generator $\tG$ and online simulator $\sS$
is only less general than arbitrary pairs $(\tG, \sS)$, we immediately obtain
the following theorem.

\begin{theorem}
	\label{thm:hire-nb-hire}
	Let $(\Pi, Y)$ be a distributional search problem. If $(\Pi, Y)$ has
	hardness $(t, \klgap)$ in relative entropy then $(Y_1,\dots, Y_m)$ has
	next-block hardness $(t, \klgap)$ in relative entropy.

	Similarly, for any $\klfail \in [0,1]$, if $(\Pi, Y)$ has hardness $(t,
	\klgap)$ in $\klfail$-min relative entropy then $(Y_1,\dots, Y_m)$ has
	next-block hardness $(t, \klgap)$ in $\klfail$-min relative entropy.
\end{theorem}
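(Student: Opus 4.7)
The plan is to realize this claim as a direct consequence of the chain rule for relative entropy (Proposition~\ref{prop:kl-chain}) together with the observation that an online $m$-block generator paired with an online simulator is a special case of the pairs $(\tG, \sS)$ quantified over in Definition~\ref{def:kl-hard}.

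First, I would fix an arbitrary online $m$-block generator $\tG$ and online simulator $\sS$ for $\tG$, set $\tY_{\leq m} = \tG(\tR_{\leq m})$ for uniform $\tR_{\leq m}$, and define $\hR_i = \sS(\hR_{<i}, Y_i)$ inductively. Repeated application of the chain rule to the pairs $(\tR_i, \tY_i)$ versus $(\hR_i, Y_i)$, conditioned on their histories, yields
\begin{displaymath}
    \sum_{i=1}^m \kl{\tR_i, \tY_i \mid \tR_{<i}, \tY_{<i}}{\hR_i, Y_i \mid \hR_{<i}, Y_{<i}} = \kl{\tR_{\leq m}, \tY_{\leq m}}{\hR_{\leq m}, Y}.
\end{displaymath}
The right-hand side is precisely the divergence bounded in Definition~\ref{def:kl-hard}, once we view $\tG$ as a two-block generator supported on $\Pi$ (identifying its last block with the witness for $\Pi$) and the composition $Y \mapsto \hR_{\leq m}$ as a polynomial-time simulator. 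Since the hypothesis bounds this quantity below by $\klgap$, the sum appearing in Definition~\ref{def:nb-hire} is also bounded below by $\klgap$, giving next-block hardness in relative entropy.

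The $\klfail$-min relative entropy version proceeds analogously using the pointwise chain rule stated in the second part of Proposition~\ref{prop:kl-chain}: for every realization drawn from $(\tR_{\leq m}, \tY_{\leq m})$, the sum of the conditional sample relative entropies telescopes to the joint sample relative entropy between $(\tR_{\leq m}, \tY_{\leq m})$ and $(\hR_{\leq m}, Y)$. Hence the event $\{\text{sum}\leq \klgap\}$ coincides with the event $\{\text{joint sample KL}\leq \klgap\}$, so the probability bound defining the $\klfail$-min relative entropy transfers verbatim. There is no real obstacle here: the only piece of bookkeeping is handling the possibility that the online simulator fails with $\hR_i = \bot$, which is already absorbed into the convention described in the remark after Definition~\ref{def:kl-hard} that $\tG_1(r) = \bot$ whenever $r \notin \Supp(\tR)$.
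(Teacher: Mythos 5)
Your overall route is the same as the paper's: the proof given there is literally ``immediate using the chain rule for relative (sample) entropy,'' i.e.\ the telescoping identity you state (in expectation for the first claim, samplewise for the $\klfail$-min claim), combined with the observation that an online generator/online simulator pair specializes the adversaries of Definition~\ref{def:kl-hard}. So the skeleton matches the paper's argument.

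The trouble is in how you cash out that specialization, which is the only non-routine step. In Definition~\ref{def:nb-hire} the online $m$-block generator produces the blocks of the instance $Y=(Y_1,\dots,Y_m)$ only; there is no witness block, so ``identifying its last block with the witness for $\Pi$'' is not available, and if you did declare $\tY_m$ to be the witness the chain-rule identity would no longer produce the divergence of Definition~\ref{def:kl-hard}: keeping the witness coordinate on both sides gives the witness-hardness divergence of Definition~\ref{def:dist-hard}, and even there the simulator receives only $Y$, whereas your composed simulator consumes the would-be witness block $Y_m$. The correct identification is that the whole online generator plays the role of $\tG_1$ and the composed map $Y\mapsto\hR_{\leq m}$ plays the role of $\sS$; with that reading your displayed equality is right, but the resulting pair is a legitimate adversary for Definition~\ref{def:kl-hard} only if $\tG$ can be completed by a second block $\tG_\w$ making it supported on $\Pi$, i.e.\ only if witnesses are also produced. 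This is not a cosmetic point: for a one-way permutation $f$ and $Y=f(X)$, the identity online generator together with the copy-the-block simulator attains zero divergence, so without some witness/support-on-$\Pi$ requirement on $\tG$ the inference from hardness in relative entropy genuinely fails. The paper's one-line proof leaves this requirement implicit (and in its actual use, Theorem~\ref{thm:bkl-hard}, the generator is an $(m+1)$-block generator supported on $(Y,W)$, so the witness block is present and the comparison is to witness hardness). As written, your justification of the key step is internally inconsistent; you should either restrict attention to online generators that come with a witness block, or explicitly flag and handle the support-on-$\Pi$ condition rather than resolving it by fiat.
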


\begin{proof}
	Immediate using the chain rule for relative (sample) entropy.
\end{proof}

The next step is to obtain a notion of hardness that makes no reference to
simulators by considering, for an online block generator $\tG$, a specific
simulator $\Sim^{\tG, T}$ which on input
$(\hr_{<i}, y_i)$, generates $\hR_i$ using rejection
sampling until $\tG_i(\hr_{<i}, \hR_{i})=y_i$.  The superscript $T$ is the
maximum number of attempts after which $\Sim^{\tG, T}$ gives up and outputs
$\bot$.  The formal definition of $\Sim^{\tG, T}$ is given in
Algorithm~\ref{algo:bounded}.

\begin{algorithm}
	\caption{Rejection sampling simulator $\Sim^{\tG, T}$ for $1\leq i\leq m$}
	\label{algo:bounded}
	\begin{algorithmic}
		\Require $y_i\in\zo^*$, $\hr_{<i}\in (\zo^v\cup\{\bot\})^{i-1}$
	\Ensure $\hr_i\in\zo^v\cup\{\bot\}$
	\If{$\hr_{i-1} = \bot$}
		\State $\hr_i\gets\bot$; \Return
	\EndIf
	\Repeat
		\State sample $\hr_i\gets\zo^v$
	\Until{$\tG_i(\hr_{\leq i}) = y_i$ or $\geq T$ attempts}
	\If{$\tG_i{(\hr_{\leq i})\neq y_i}$}
		\State $\hr_i \gets \bot$ 
	\EndIf
	\end{algorithmic}
\end{algorithm}

For the rejection sampling simulator $\Sim^{\tG, T}$, we will show
in Lemma~\ref{lem:bkl-reduction} that the next-block
hardness in relative entropy in Definition~\ref{def:nb-hire} decomposes
as the sum of two terms:
\begin{enumerate}
	\item A term measuring how well $\tG_{\leq m}$ approximates the
		distribution $Y$ in an online manner, without any reference to a
		simulator.
	\item An error term measuring the failure probability of the rejection
		sampling procedure due to having a finite time bound $T$.
\end{enumerate}
As we show in Lemma~\ref{lem:stexpectation}, the error term can be made
arbitrarily small by setting the number of trials $T$ in $\Sim^{\tG, T}$
to be a large enough multiple of $m\cdot 2^{\ell}$ where $\ell$ is the
length of the blocks of $\tG_{\leq m}$.
This leads to a $\poly(m)$ time algorithm whenever $\ell$ is logarithmic
in $m$. That is, given an online block generator $\tG$ for which
$\tG_{\leq m}$ has short blocks, we obtain a corresponding simulator
``for free''. Thus, considering only the first term leads to the
following clean definition of next-block inaccessible relative entropy
that makes no reference to simulators.

\begin{definition}[next-block inaccessible relative entropy]
	\label{def:block-hard}
	The joint distribution $(Y_1, \dots, Y_{m})$ has \emph{next-block
	inaccessible relative entropy $(t, \klgap)$}, if for every time $t$ online
	$m$-block generator $\tG$ supported on $Y_{\leq m}$, writing $\tY_{\leq
	m}\eqdef\tG(\tR_{\leq m})$ for uniform $\tR_{\leq m}$, we have:
	\begin{displaymath}
		\sum_{i=1}^{m} \kl{\tY_i|\tR_{<i}, \tY_{<i}}{Y_i|R_{<i},
		Y_{<i}}> \klgap
		\,,
	\end{displaymath}
	where $R_i$ is a ``dummy'' random variable over  the domain of $\tG_i$ and
	independent of $Y_{\leq m+1}$.  Similarly, for $\klfail\in[0,1]$, we say
	that $(Y_1, \dots, Y_{m+1})$ has \emph{next-block inaccessible
	$\klfail$-min relative entropy $(t,\klgap)$} if for every $\tG$ as above:
	\begin{displaymath}
		\pr[\substack{r_{\leq m}\from\tR_{\leq m}\\y_{\leq m}\from
			\tG(r_{\leq m})}]{\sum_{i=1}^{m}
		\skl{y_i,r_{<i},y_{<i}}{\tY_i|\tR_{<i}, \tY_{<i}}{Y_i|R_{<i},
		Y_{<i}} \leq \klgap} < \klfail
		\,,
	\end{displaymath}
	where $(\tY_{\leq m},\tR_{\leq m})$ are defined as above.
\end{definition}

\begin{remark}
	Since $\tY_{<i}$ is a function of $\tR_{<i}$, the first
	conditional distribution in the KL is effectively $\tY_i|\tR_{<i}$.
	Similarly the second distribution is effectively $Y_i|Y_{<i}$. The extra
	random variables are there for syntactic consistency.
\end{remark}

With this definition in hand, we can make formal the claim that, even as
sample notions, the next-block hardness in relative entropy decomposes
as next-block inaccessible relative entropy plus an error term.
\begin{lemma}
	\label{lem:bkl-reduction}
	For a joint distribution $(Y_1, \dots, Y_{m})$, let $\tG$ be an online
	$m$-block generator supported on $Y_{\leq m}$. Define $(\tY_1,\dots,
	\tY_{m})\eqdef \tG(\tR)$ for uniform random variable $\tR = (\tR_1,
	\dots, \tR_{m})$ and let $R_i$ be a ``dummy'' random variable over the
	domain of $\tG_i$ and independent of $Y_{\leq m}$. We also define $\hR_i
	\eqdef \Sim^{\tG, T}(\hR_{<i}, Y_i)$ and $\hY_i = \tG(\hR_{\leq i})$.
	Then, for all $r\in\Supp(\tR)$ and $y\eqdef \tG(r)$:
	\begin{align*}
		\sum_{i=1}^m&\skl{r,y}{\tR_i, \tY_i | \tR_{<i}, \tY_{<i}}{\hR_i, Y_i|\hR_{<i},Y_{<i}}\\
					&= \sum_{i=1}^m \skl{r,y}{\tY_i | \tR_{<i},\tY_{<i}}{Y_i|R_{<i},Y_{<i}}
					+\sum_{i=1}^m\log\left(\frac{1}{\pr{\hY_i = y_i | Y_i=y_i, \hR_{<i} = r_{<i}}}\right)
		\,.
	\end{align*}
	Moreover, the running time of $\Sim^{\tG, T}$ on input $\hR_{< i},
	Y_i$ is $O(\lvert r_i\rvert \cdot  T)$, with at most $T$ oracle calls to
	$\tG$.
\end{lemma}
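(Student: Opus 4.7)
The plan is to establish the identity pointwise and block by block, with the only substantive ingredient being the uniformity of rejection sampling conditioned on acceptance; the running-time claim then follows by inspection of Algorithm~\ref{algo:bounded}. Fix $r \in \Supp(\tR)$, set $y \eqdef \tG(r)$, and for each $i$ write $S_i \eqdef \{\rho : \tG_i(r_{<i},\rho) = y_i\}$ for the set of valid preimages at block $i$; note that $r_i \in S_i$ since $y = \tG(r)$.

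For the $i$-th summand on the left-hand side, I would factor the denominator of the sample KL as
\[
\pr{\hR_i = r_i \mid \hR_{<i} = r_{<i}, Y_i = y_i, Y_{<i} = y_{<i}} \cdot \pr{Y_i = y_i \mid \hR_{<i} = r_{<i}, Y_{<i} = y_{<i}},
\]
and handle the easy parts first: the numerator $\pr{\tR_i = r_i, \tY_i = y_i \mid \tR_{<i} = r_{<i}, \tY_{<i} = y_{<i}}$ equals $2^{-s_i}$ because $\tY$ is a deterministic function of $\tR$ with $y_i = \tG_i(r_{\le i})$; and the rightmost probability reduces to $\pr{Y_i = y_i \mid Y_{<i} = y_{<i}}$ because $\hR_{<i}$ is produced by $\Sim$ as a function of $Y_{<i}$ and independent internal coins, and is therefore conditionally independent of $Y_i$ given $Y_{<i}$.

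The main step, and the only substantive one, is the identification of the remaining factor. Conditioned on $\hR_{<i} = r_{<i}$ (so no earlier block has failed) and $Y_i = y_i$, the simulator $\Sim^{\tG, T}$ performs up to $T$ i.i.d.\ uniform draws from $\zo^{s_i}$ and outputs the first that lands in $S_i$, or $\bot$ after $T$ failures; in particular, conditioned on successful acceptance the output is uniform on $S_i$, yielding
\[
\pr{\hR_i = r_i \mid \hR_{<i} = r_{<i}, Y_i = y_i} = \frac{\pr{\hY_i = y_i \mid Y_i = y_i, \hR_{<i} = r_{<i}}}{|S_i|}.
\]
Plugging this in, the $i$-th LHS summand rewrites as $\log\bigl((|S_i|/2^{s_i})\big/\pr{Y_i = y_i \mid Y_{<i} = y_{<i}}\bigr) + \log\bigl(1/\pr{\hY_i = y_i \mid Y_i = y_i, \hR_{<i} = r_{<i}}\bigr)$. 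Since $\pr{\tY_i = y_i \mid \tR_{<i} = r_{<i}, \tY_{<i} = y_{<i}} = |S_i|/2^{s_i}$ and $R$ is independent of $Y$ by definition, the first of these logarithms is exactly the $i$-th sample KL on the right-hand side; summing over $i$ gives the identity. The running time bound is then immediate: each of the (at most) $T$ trials costs the sampling of $s_i = |r_i|$ uniform bits plus one oracle call to $\tG_i$, giving $O(|r_i|\cdot T)$ time and at most $T$ oracle calls per block.

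The main obstacle is essentially the rejection-sampling identity displayed above, where the preimage count $|S_i|$ appears implicitly on both sides; the cancellation of this factor is exactly what decouples the ``honest'' divergence (between the $i$-th block of $\tG$ and $Y_i$) from the failure-probability term introduced by the finite trial budget $T$.
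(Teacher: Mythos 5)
Your proof is correct and follows essentially the same route as the paper's: there, one replaces $Y_i$ by $\hY_i$, applies the chain rule for sample relative entropy, and observes that the $\tR_i$-conditional term cancels because the rejection-sampling output conditioned on acceptance has exactly the law of $\tR_i$ given $\tY_i=y_i$ (i.e., uniform on your $S_i$), followed by the same Bayes'-rule factorization and the conditional independence of $Y_i$ from $\hR_{<i}$ given $Y_{<i}$. Your explicit computation via $|S_i|/2^{s_i}$ is just a more hands-on rendering of that same cancellation, so the two arguments coincide in substance.
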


\begin{proof}
	Consider $r\in\Supp(\tR)$ and $y\eqdef \tG(r)$. Then:
	  \begin{align*}
		  \sum_{i=1}^m&\skl{r,y}{\!\tR_i, \tY_i | \tR_{<i}, \tY_{<i}}{\hR_i, Y_i|\hR_{<i},Y_{<i}}\\
					  &=\sum_{i=1}^m\skl{r,y}{\tR_i, \tY_i | \tR_{<i}, \tY_{<i}}{\hR_i, \hY_i|\hR_{<i},\hY_{<i}}\\
	      &= \sum_{i=1}^m \left(\skl{r,y}{\tR_i|\tR_{<i}, \tY_{\leq i}}{\hR_i|\hR_{<i}, \hY_{\leq i}}
	       + \skl{r,y}{\tY_i | \tR_{<i}, \tY_{<i}}{\hY_i|\hR_{<i},\hY_{<i}}\right)\\
	      &= \sum_{i=1}^m \skl{r,y}{\tY_i | \tR_{<i}, \tY_{<i}}{\hY_i|\hR_{<i},\hY_{<i}}
	      = \sum_{i=1}^m \skl{r,y}{\tY_i | \tR_{<i}}{\hY_i|\hR_{<i}}
		\,.
	  \end{align*}
	  The first equality is because $Y_i = \hY_i$ since we are only considering
	  non-failure cases ($r_i\neq \bot$). The second equality is the chain
	  rule.  The penultimate equality is by definition of rejection
	sampling: $\tR_i|\tR_{<i},\tY_{\leq i}$ and $\hR_i|\hR_{<i},\hY_{\leq i}$
	are identical on $\Supp(\tR_i)$ since conditioning on $\hY_i = y$ implies
	that only non-failure cases ($r_i\neq\bot$) are considered.
	The last equality is because $\tY_{<i}$ (resp.  $\hY_{<i}$) is
	a deterministic function of $\tR_{<i}$ (resp.  $\hR_{<i}$).

	We now relate $\hY_i|\hR_{<i}$  to $Y_i|Y_{<i}$:
	\begin{align*}
		&\pr{\hY_i = y_i | \hR_{<i} = r_{<i}}
		=\pr{\hY_i = y_i, Y_i=y_i | \hR_{<i} = r_{<i}}
		\tag{$\hY_i=y_i\Leftrightarrow \hY_i=y_i\wedge Y_i=y_i$}\\
		&\quad\,\,=\pr{\hY_i = y_i | Y_i=y_i, \hR_{<i} = r_{<i}}
		\cdot\pr{Y_i = y_i | \hR_{<i} = r_{<i}}
		\tag{Bayes' Rule}\\
		&\quad\,\,=\pr{\hY_i = y_i | Y_i=y_i, \hR_{<i} = r_{<i}}
		\cdot\pr{Y_i = y_i | Y_{<i} = y_{<i}}\,,
	\end{align*}
	where the last equality is because when $r\in\Supp(\tR)$,
	$\hR_{<i}=r_{<i}\Rightarrow Y_{<i}=y_{<i}$ and because $Y_i$ is independent
	of $\hR_{<i}$ given $Y_{<i}$ (as $\hR_{<i}$ is simply a randomized function
	of $Y_{<i}$). The conclusion of the lemma follows by combining the previous
	two derivations.
\end{proof}

Observe that taking expectations with respect to a uniform $\tR$ on both sides
in the conclusion of Lemma~\ref{lem:bkl-reduction}, we get that next-block
hardness in relative entropy is equal to the sum of next-block inaccessible
relative entropy and the expectation of the error term coming from the
rejection sampling procedure. The following lemma upper bounds this
expectation.

\begin{lemma}\label{lem:stexpectation}
	Let $\tG$ be an online $m$-block generator, and let $L_i\eqdef 2^{|\tG_i|}$
	be the size of the codomain of $\tG_i$, $i\in[m]$. Then for all $i\in[m]$,
	$r_{<i}\in\Supp(\tR_{<i})$ and uniform $\tR_i$:
	\begin{displaymath}
		\ex[y_i\from\tG_i(r_{<i},\tR_i)]{\log \frac{1}{\pr{\hY_i=y_i|Y_{i} = y_{i},\hR_{<i} = r_{<i}}}} \leq
		\log\left(1+\frac{L_i-1}{T}\right)\;.
	\end{displaymath}
\end{lemma}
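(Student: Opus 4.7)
The strategy is to first express the inner probability in closed form using the rejection-sampling mechanism, then apply Jensen's inequality to move the logarithm outside the expectation, and finally reduce the problem to a pointwise secant-line bound that can be summed over the codomain.

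I would start by fixing $r_{<i}\in\Supp(\tR_{<i})$ and setting $p_y\eqdef\pr{\tG_i(r_{<i},\tR_i)=y}$ for each $y$ in the codomain of $\tG_i$, with $\tR_i$ uniform. Since $r_{<i}$ is a non-failure history, on input $(\hR_{<i}=r_{<i},y_i)$ the simulator $\Sim^{\tG,T}$ draws up to $T$ independent uniform candidates and succeeds on any one of them with probability $p_{y_i}$, so
\[
\pr{\hY_i = y_i \mid Y_i = y_i,\, \hR_{<i} = r_{<i}} = 1 - (1-p_{y_i})^T.
\]
Since $y_i = \tG_i(r_{<i},\tR_i)$ assigns mass $p_y$ to each $y$, the expectation to bound equals $\sum_y p_y \log\bigl(1/(1-(1-p_y)^T)\bigr)$, and by concavity of $\log$ and Jensen's inequality it is at most
\[
\log \sum_y \frac{p_y}{1 - (1-p_y)^T} \;=\; \log \sum_y g(p_y),
\]
where $g(p)\eqdef p/(1-(1-p)^T)$ (continuously extended by $g(0) = 1/T$).

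The remaining step, and the main technical point, is the pointwise bound $g(p)\leq \tfrac{1}{T} + \tfrac{T-1}{T}\,p$ for $p\in[0,1]$. Substituting $q=1-p$ and clearing denominators, this is equivalent to $q^{T-1}\bigl(T-(T-1)q\bigr)\leq 1$ on $[0,1]$; differentiating gives $T(T-1)q^{T-2}(1-q)\geq 0$, so the left-hand side is nondecreasing on $[0,1]$ and equals $1$ at $q=1$. Summing the bound on $g$ over all $L_i$ codomain elements---including those with $p_y=0$, which only loosens the inequality since $g(0) = 1/T \geq 0$---gives $\sum_y g(p_y) \leq L_i/T + (T-1)/T = 1 + (L_i-1)/T$, and taking logs yields the claimed bound. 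The principal obstacle I anticipate is identifying the right pointwise inequality on $g$: once one recognizes $g$ as increasing on $[0,1]$ with boundary values $1/T$ and $1$, the secant line is the natural candidate, and the elementary monotonicity reduction above is the cleanest way I see to verify it without computing $g''$ directly.
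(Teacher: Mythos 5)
Your proof is correct, and the first two steps coincide with the paper's: you derive the same closed form $1-(1-p_{y_i})^T$ for the rejection-sampling success probability and apply Jensen's inequality to reduce the problem to bounding $\sum_y g(p_y)$ with $g(p)=p/\bigl(1-(1-p)^T\bigr)$. Where you diverge is the final step. The paper invokes convexity of $g$ (proved separately in Appendix~\ref{app:proof}, Lemma~\ref{claim:convex}, via a second-derivative computation on $g(1-x)$) and then argues that the convex map $\{p_y\}\mapsto\sum_y g(p_y)$ attains its maximum over the probability simplex at a vertex, giving $g(1)+(L_i-1)\lim_{p\to 0}g(p)=1+(L_i-1)/T$. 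You instead prove the pointwise secant bound $g(p)\leq \tfrac{1}{T}+\tfrac{T-1}{T}p$ directly, by reducing it to the monotonicity of $q\mapsto Tq^{T-1}-(T-1)q^T$ on $[0,1]$ with value $1$ at $q=1$, and then sum over the codomain using $\sum_y p_y=1$. Of course the chord bound is exactly what convexity would give you, so the two arguments establish the same inequality with the same constant; but your verification is more elementary, avoids the appendix convexity lemma and the simplex-extremality step entirely, and handles the $p_y=0$ terms cleanly via the continuous extension $g(0)=1/T$. The only cosmetic caveat is the endpoint bookkeeping when you divide by $q$ (i.e., the case $p=1$, where the bound holds with equality), which you could note in one clause.
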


\begin{proof}[Proof of Lemma~\ref{lem:stexpectation}]
	By definition of $\Sim^{\tG, T}$, we have:
	\begin{displaymath}
		\pr{\hY_i=y_i|Y_{i} = y_{i},\hR_{<i} = r_{<i}}
		= 1 - \left(1-\pr{\tG_i(r_{<i}, \tR_i) = y_i}\right)^T
		\,.
	\end{displaymath}
	Applying Jensen's inequality, we have:
	\begin{align*}
		&\ex[y_i\from\tG_i(r_{<i},\tR_i)]{\log \left(\frac{1}{\pr{\hY_i=y_i|Y_{i} = y_{i},\hR_{<i} = r_{<i}}}\right)}\\
		&\qquad\qquad\qquad\leq \log \ex[y_i\from\tG_i(r_{<i},\tR_i)]{\frac{1}{\pr{\hY_i=y_i|Y_{i}
		= y_{i},\hR_{<i} = r_{<i}}}}\\
		&\qquad\qquad\qquad=\log\left(\sum_{y\in\mathrm{Im}(\tG_i(r_{<i}, \cdot))}\frac{p_y}{1-(1-p_y)^T}\right)
	\end{align*}
	where $p_y=\pr{\tG_i(r_{<i}, \tR_i) = y}$.
	Since the function $x/\left(1-(1-x)^T\right)$ is convex (see
	Lemma~\ref{claim:convex} in the appendix), the maximum of the expression
	inside the logarithm over probability distributions $\{p_y\}$ is achieved
	at the extremal points of the standard probability simplex.
	Namely, when all but one $p_y\to 0$ and the other one is $1$.
	Since $\lim_{x\to 0}x/\big(1-(1-x)^T\big) = 1/T$:
	\begin{displaymath}
		\log\left(\sum_{y\in\mathrm{\rm Im}(\tG_i)}\frac{p_y}{1-(1-p_y)^T}\right)
		\leq \log\left(1+ (L_i-1)\cdot \frac{1}{T}\right)\,.\qedhere
	\end{displaymath}
\end{proof}

By combining Lemmas~\ref{lem:bkl-reduction} and \ref{lem:stexpectation}, we are
now ready to state the main result of this section, relating witness hardness
in relative entropy to next-block inaccessible relative entropy.

\begin{theorem}
	\label{thm:bkl-hard}
	Let $\Pi$  be a binary relation and let $(Y, W)$ be a pair of random
	variables supported on $\Pi$.  Let $Y = (Y_1, \dots, Y_m)$ be a partition
	of $Y$ into blocks of at most $\ell$ bits. Then we have:
	\begin{enumerate}
		\item if $(\Pi, Y, W)$ has witness hardness $(t, \klgap)$ in relative
			entropy, then for every $0< \klgap' \leq \klgap$, $(Y_1, \dots,
			Y_m, W)$ has  next-block inaccessible relative entropy $(t',
			\klgap-\klgap')$ where $t' = \Omega(t\klgap'/(m^22^\ell))$.
		\item if $(\Pi, Y, W)$ has witness hardness $(t, \klgap)$ in
			$\klfail$-min relative entropy then for every $0 < \klgap' \leq
			\klgap$ and $0 \leq \klfail' \leq 1-\klfail$, we have that $(Y_1,
			\dots, Y_m, W)$ has  next-block inaccessible
			$(\klfail+\klfail')$-min relative entropy $(t', \klgap-\klgap')$
			where $t' = \Omega(t\klfail'\klgap'/(m^22^\ell))$.
	\end{enumerate}
\end{theorem}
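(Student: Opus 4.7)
The plan is to prove the contrapositive. Assuming the conclusion fails, there would exist an online $(m+1)$-block generator $\tG$ of running time $t'$, supported on $(Y_{\leq m}, W)$, whose next-block inaccessible relative entropy sum (over all $m+1$ blocks, with $Y_{m+1}\eqdef W$) is at most $\klgap-\klgap'$. From this I would construct a two-block generator-simulator pair $(\tG^*, \sS^*)$ of total running time $t$ whose KL divergence is at most $\klgap$, contradicting the assumed witness hardness.

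The construction reuses $\tG$ directly: I set $\tG^*_1(\tR)\eqdef\tG_{\leq m}(\tR_{\leq m})$ and $\tG^*_\w(\tR)\eqdef\tG_{m+1}(\tR)$, bundling the first $m$ online blocks into the instance component and taking the $(m+1)$-th block as the witness, with $\tR=(\tR_1,\ldots,\tR_{m+1})$. For the simulator $\sS^*(Y)$, I would produce $\hR_{\leq m}$ by running the rejection-sampling procedure $\Sim^{\tG,T}$ of Algorithm~\ref{algo:bounded} separately on each short block $Y_1,\ldots,Y_m$, choosing $T=\Theta(m\cdot 2^\ell/\klgap')$ so that by Lemma~\ref{lem:stexpectation} each expected per-block error $\log(1+(L_i-1)/T)$ is at most $\klgap'/m$. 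For the witness-block randomness $\hR_{m+1}$ the simulator would sample freshly, without attempting to match $W$, since rejection sampling against a potentially long $W$ could cost time exponential in $|W|$.

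To analyse the divergence $\kl{\tR,\tG^*_1(\tR),\tG^*_\w(\tR)}{\sS^*(Y),Y,W}$, I would expand it by the chain rule into a sum of $m+1$ per-block conditional KLs. Applying the sample-wise identity of Lemma~\ref{lem:bkl-reduction} to the first $m$ blocks, each such conditional KL splits as the corresponding next-block inaccessible term plus a rejection-sampling failure term; the $(m+1)$-th term contributes the $(m+1)$-th next-block inaccessible term by a similar chain-rule accounting. Summing, the total KL is at most the full next-block inaccessible sum (bounded by $\klgap-\klgap'$ by our choice of $\tG$) plus the aggregate rejection-sampling error (bounded by $\klgap'$ by our choice of $T$), totalling at most $\klgap$.

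The main obstacle, and the delicate step, is the treatment of the $(m+1)$-th (witness) block in this decomposition: because the simulator has no access to $W$ and cannot afford exponential time in $|W|$, the chain-rule bookkeeping must be arranged so that the witness-block contribution collapses into the $(m+1)$-th next-block inaccessible term without introducing spurious extra terms. The running time of $\sS^*$ is $O(m\cdot T\cdot t')=O(m^2 2^\ell t'/\klgap')$, which lies within the budget $t$ by the stated lower bound on $t'$. Part~2 follows from the same construction using the sample-wise version of Lemma~\ref{lem:bkl-reduction}, together with a Markov-style argument bounding the failure probability of the rejection-sampling event across the $m$ short blocks, which is where the extra $\klfail'$ factor enters the time bound.
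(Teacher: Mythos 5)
Your route is essentially the paper's (rejection-sampling simulator $\Sim^{\tG,T}$ on the $m$ short blocks with $T=\Theta(m2^\ell/\klgap')$, per-block splitting via Lemma~\ref{lem:bkl-reduction}, error control via Lemma~\ref{lem:stexpectation}, expectations for part~1 and a Markov argument for part~2, and the $O(mTt')$ time accounting), but the step you yourself flag as the ``main obstacle'' is a genuine gap, and the bookkeeping you chose does not close it. By declaring $\tR=(\tR_1,\dots,\tR_{m+1})$ to be the randomness appearing in the witness-hardness divergence and having $\sS^*$ output a fresh uniform $\hR_{m+1}$, the chain rule yields, besides the $m+1$ next-block terms and the rejection-sampling errors, the additional term $\kl{\tR_{m+1}\mid \tR_{\leq m},\tY_{\leq m+1}}{\hR_{m+1}\mid \hR_{\leq m},Y,W}$. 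Since $\hR_{m+1}$ is uniform and independent of everything else, this term equals the expected value of $\lvert\tR_{m+1}\rvert-\ent{\tR_{m+1}\mid\tR_{\leq m},\tY_{m+1}}$, which is as large as $\lvert\tR_{m+1}\rvert$ whenever the adversarial online generator randomizes its witness block (e.g.\ when $\tG_{m+1}$ is injective in $\tR_{m+1}$, a perfectly admissible generator when $W$ has many values given $Y$). The term sits on the wrong side of the inequality: it inflates the divergence of your constructed pair above $\klgap$, so no contradiction with witness hardness follows; and it cannot be removed by a better simulation of $\hR_{m+1}$, since $\sS^*$ never sees $W$ and $W$ may be long.

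The paper's resolution is not to simulate the witness-block coins at all: in Definition~\ref{def:dist-hard} the divergence involves only the randomness of the instance block $\tG_1$ (compare Definition~\ref{def:kl-hard}, and the proof of Theorem~\ref{thm:witness-kl-hard}, where $w\from\tG_2(r)$ is treated as randomized given $r$). One therefore folds $\tR_{m+1}$ into the internal coins of $\tG_\w$, sets $\tR\eqdef\tR_{\leq m}$ and $\sS^{\tG,T}(Y)\eqdef\hR_{\leq m}$, and the chain rule gives the witness contribution as $\skl{\cdot}{\tG_\w(\tR)\mid\tR,\tG_1(\tR)}{W\mid\sS^{\tG,T}(Y),Y}$; because $\hR_{\leq m}$ is a randomized function of $Y$ alone, $W$ is conditionally independent of $\hR_{\leq m}$ given $Y$, so this equals exactly the $(m+1)$-th next-block term $\skl{\cdot}{\tY_{m+1}\mid\tR_{\leq m},\tY_{\leq m}}{Y_{m+1}\mid R_{\leq m},Y_{\leq m}}$ with no residue. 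With that correction (everything else in your sketch unchanged), the argument goes through and matches the paper.
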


\begin{proof}
	We consider an online generator $\tG$ supported on $(Y_1,\dots, Y_m, W)$
	and the simulator $\Sim^{\tG, T}$. For
	convenience, we sometimes write $Y_{m+1}$ for $W$. Define $\tR\eqdef
	\tR_{\leq m}$ where $\tR_{\leq m}$ is a sequence of independent and
	uniformly random variables, $\tY_{\leq m+1} \eqdef\tG(\tR)$, $\tG_1(\tR)
	\eqdef \tY_{\leq m}$ and $\tG_\w(\tR) \eqdef \tY_{m+1}$. We also write for
	$1\leq i\leq m$, $\hR_i \eqdef \Sim^{\tG, T}(\hR_{<i}, Y_i)$, $\hY_i
	\eqdef \tG(\hR_{\leq i})_i$. Finally we define $\sS^{\tG, T}(Y) \eqdef
	\hR_{\leq m}$.

	Observe that $(\tG_1, \tG_\w)$ is a two-block generator supported on $\Pi$,
	so the pair $(\tG, \sS^{\tG, T})$ forms a pair a algorithms as in the
	definition of witness hardness in relative entropy
	(Definition~\ref{def:dist-hard}). We focus on sample notions first, and
	consider $r\in\Supp(\tR)$, $y\in\Supp(\tY_{\leq m})$ and
	$w\in\Supp(\tY_{m+1})$. First we use the chain rule to isolate the witness
	block:
	\begin{align*}
		&\skl{r,y, w}{\tR, \tG_1(\tR), \tG_\w(\tR)}{\sS^{\tG, T}(Y), Y, W}\\
		&\quad\quad= \skl{r,y, w}{\tG_\w(\tR)|\tR, \tG_1(\tR)}{W | \sS^{\tG, T}(Y),
		Y}
		+\skl{r,y, w}{\tR, \tG_1(\tR)}{\sS^{\tG, T}(Y), Y}\\
		&\quad\quad= \skl{r,y, w}{\tY_{m+1}|\tR_{\leq m}, \tY_{\leq m}}{Y_{m+1}
		| R_{\leq m}, Y_{\leq m}}
		+\skl{r,y, w}{\tR, \tG_1(\tR)}{\sS^{\tG, T}(Y), Y}
		\,.
	\end{align*}

	Next, as in Theorem~\ref{thm:hire-nb-hire} we apply the chain rule to
	decompose the second term on the right-hand side and obtain next-block
	hardness in relative entropy:
	\begin{displaymath}
		\skl{r,y, w}{\tR, \tG_1(\tR)}{\sS^{\tG, T}(Y), Y}
		= \sum_{i=1}^{m} \skl{r,y,w}{\tR_i, \tY_i|\tR_{<i}, \tY_{<i}}{\hR_i, Y_i|\hR_{<i},
		Y_{<i}}
		\,.
	\end{displaymath}

	Finally, we use Lemma~\ref{lem:bkl-reduction} to further decompose the
	right-hand side term into inaccessible relative entropy and the rejection
	sampling error:
	\begin{align*}
		 &\sum_{i=1}^{m} \skl{r,y,w}{\tR_i, \tY_i|\tR_{<i}, \tY_{<i}}{\hR_i, Y_i|\hR_{<i},
		Y_{<i}}\\
		 &\quad=\sum_{i=1}^m \skl{r,y}{\tY_i
		| \tR_{<i},\tY_{<i}}{Y_i|R_{<i},Y_{<i}}
		+\sum_{i=1}^m\log\left(\frac{1}{\pr{\hY_i = y_i | Y_i=y_i, \hR_{<i} = r_{<i}}}\right)
		\,.
	\end{align*}

	Combining the previous derivations, we obtain:
	\begin{align*}
		&\skl{r,y, w}{\tR, \tG_1(\tR), \tG_\w(\tR)}{\sS^{\tG, T}(Y), Y, W}\\
		&\quad=\sum_{i=1}^{m+1} \skl{r,y}{\tY_i | \tR_{<i},\tY_{<i}}{Y_i|R_{<i},Y_{<i}}
		+\sum_{i=1}^m\log\left(\frac{1}{\pr{\hY_i = y_i | Y_i=y_i, \hR_{<i} = r_{<i}}}\right)
		\,.
	\end{align*}

	Now, the first claim of the theorem follows by taking expectations on both
	sides and observing that when $T = m\cdot 2^{\ell}/(\klgap'\ln 2)$,
	Lemma~\ref{lem:stexpectation} implies that the expected value of the
	rejection sampling error is smaller than $\klgap'$. 

	For the second claim, we first establish  using
	Lemma~\ref{lem:stexpectation} and Markov's inequality that:
	\begin{displaymath}
		\pr[\substack{y_{\leq m+1}\from\tY_{\leq m+1}\\ r\from \tR}] {
		\sum_{i = 1}^m \log\left(\frac{1}{\pr{\hY_i=y_i|\hR_{<i} = r_{<i}, \hY_{<i} = y_{<i}}}\right)
		\geq \frac{m\cdot 2^\ell}{T\klfail'\ln 2}
	}
	\leq \klfail'
	\end{displaymath}
	and we reach a similar conclusion by setting $T = m\cdot
	2^{\ell}/(\klfail'\klgap'\ln 2)$.
\end{proof}

\begin{remark}
	For fixed distribution and generators, in the limit where $T$ grows to
	infinity, the error term caused by the failure of rejection sampling in
	time $T$ vanishes. In this case, hardness in relative entropy implies
	next-block inaccessible relative entropy without any loss in the hardness
	parameters.
\end{remark}

\subsection{Next-block inaccessible relative entropy and inaccessible entropy}
\label{sec:ie}

We first recall the definition from \cite{HRVW16}, slightly adapted to our
notations.

\begin{definition}[Inaccessible Entropy]
	\label{def:ie}
	Let $(Y_1,\dots,Y_{m+1})$ be a joint distribution.\footnote{We write $m+1$
		the total number of blocks, since in this section we will think of
		$Y_{m+1}$ (also written as $W$) as the witness of a distributional
		search problem and $(Y_1,\dots,Y_m)$ are the blocks of the instance as
		in the previous section.}
	We say that $(Y_1,\dots,Y_{m+1})$ has \emph{inaccessible entropy
	$(t,\klgap)$} if for all $(m+1)$-block online generators $\tG$ running in
	time $t$ and consistent with $(Y_1,\dots,Y_{m+1})$:
	\begin{displaymath}
		\sum_{i=1}^{m+1} \left(\Ent(Y_i|Y_{<i}) - \Ent(\tY_i|\tR_{< i}) \right)> \klgap
		\;.
	\end{displaymath}
	where $(\tY_1,\dots,\tY_{m+1}) = \tG(\tR_1,\dots,\tR_{m+1})$ for a uniform
	$\tR_{\leq m+1}$.

	Similarly $(Y_1, \dots, Y_{m+1})$ has \emph{inaccessible $\klfail$-max
	entropy $(t,\klgap)$} if for all $(m+1)$-block online generators $\tG$
	running in time $t$ and  consistent with $(Y_1,\dots,Y_{m+1})$:
	\begin{displaymath}
		\pr[\substack{r_{\leq m+1}\from \tR_{\leq m+1}\\y_{\leq m+1}\from\tG(r_{\leq m+1})}]{
		\sum_{i=1}^{m+1} \left(\sent{y_i,y_{<i}}{Y_i|Y_{<i}} - \sent{y_i,r_{<i}}{\tY_i|\tR_{< i}} \right)\leq \klgap
		}
		< \klfail
		\;.
	\end{displaymath}
\end{definition}

Unfortunately, one unsatisfactory aspect of Definition~\ref{def:ie} is that
inaccessible entropy can be negative since the generator $\tG$ could have more
entropy than $(Y_1,\dotsc,Y_{m+1})$: if all the $Y_i$ are independent biased
random bits, then a generator $\tG$ outputting unbiased random bits will have
negative inaccessible entropy. On the other hand, next-block inaccessible
relative entropy (Definition~\ref{def:block-hard}) does not suffer from this
drawback.

Moreover, in the specific case where $(Y_1, \dots, Y_{m+1})$ is a flat
distribution\footnote{For example, the distribution $(Y_{\leq m}, Y_{m+1})
= (f(U), U)$ for a function $f$ and uniform input $U$ is always a flat
distribution even if $f$ itself is not regular.}, then no distribution with the
same support can have higher entropy and in this case
Definitions~\ref{def:block-hard} and \ref{def:ie} coincide as stated in the
following theorem.

\begin{theorem}
	  \label{thm:nb-inacc}
	  Let $(Y_1,, \dots, Y_{m+1})$ be a flat distribution and
	  $\tG$ be an $(m+1)$-block generator consistent with $Y_{\leq m+1}$. Then
	  for $\tY_{\leq m+1} = \tG(\tR_{\leq m+1})$ for uniform $\tR_{\leq m+1}$:
	  \begin{enumerate}
	\item For every $y_{\leq m+1},r_{\leq m+1}\in \Supp(\tY_{\leq m+1}, \tR_{\leq m+1})$, it holds that
	    \begin{align*}
		&\sum_{i=1}^{m+1}\left(\sent{y_i, y_{<i}}{Y_i|Y_{<i}} - \sent{y_i, r_{<i}}{\tY_i|\tR_{<i}}\right)\\
		&\hspace{6em}=\sum_{i=1}^{m+1} \skl{y_i, y_{<i}, r_{<i}}{\tY_i|\tR_{<i}, \tY_{<i}}{Y_i|R_{<i}, Y_{<i}}
	    \end{align*}
		In particular, $(Y_1, \dots, Y_{m+1})$ has next-block inaccessible
		$\klfail$-min relative entropy $(t,\klgap)$ if and only if it has
		inaccessible $\klfail$-max entropy $(t,\klgap)$.
	\item Furthermore,
	    \[
		\sum_{i=1}^{m+1} \left(\ent{Y_i|Y_{<i}} - \ent{\tY_i|\tR_{<i}}\right)
		= \sum_{i=1}^{m+1} \kl{\tY_i|\tR_{<i}, \tY_{<i}}{Y_i|R_{<i}, Y_{<i}},
	    \]
		so in particular, $(Y_1, \dots, Y_{m+1})$ has next-block
		inaccessible relative entropy $(t,\klgap)$ if and only if it has inaccessible
		entropy $(t, \klgap)$.
	  \end{enumerate}
\end{theorem}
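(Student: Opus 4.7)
The plan is to first prove Part 1 as a pointwise identity by a direct algebraic calculation that does \emph{not} require flatness, and then derive Part 2 from Part 1 by taking expectations, which is the only step where the flatness hypothesis enters.

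For Part 1, I would work summand-by-summand and expand the $i$-th term of the right-hand side as
\[
\skl{y_i, y_{<i}, r_{<i}}{\tY_i \mid \tR_{<i}, \tY_{<i}}{Y_i \mid R_{<i}, Y_{<i}} = \log\frac{\pr{\tY_i = y_i \mid \tR_{<i} = r_{<i},\, \tY_{<i} = y_{<i}}}{\pr{Y_i = y_i \mid R_{<i} = r_{<i},\, Y_{<i} = y_{<i}}}.
\]
The denominator simplifies to $\pr{Y_i = y_i \mid Y_{<i} = y_{<i}}$ since $R_{<i}$ is a dummy variable independent of $Y$. The numerator simplifies to $\pr{\tY_i = y_i \mid \tR_{<i} = r_{<i}}$, because $\tY_{<i}$ is a deterministic function of $\tR_{<i}$ and, since $(r, y) \in \Supp(\tR_{\leq m+1}, \tY_{\leq m+1})$, we have $\tG_{<i}(r_{<i}) = y_{<i}$, making the event $\tY_{<i} = y_{<i}$ redundant in the conditioning. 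The $i$-th summand on the right therefore equals $\sent{y_i, y_{<i}}{Y_i \mid Y_{<i}} - \sent{y_i, r_{<i}}{\tY_i \mid \tR_{<i}}$, matching the $i$-th summand on the left. Summing over $i$ gives the pointwise identity, and the equivalence of next-block inaccessible $\klfail$-min relative entropy with inaccessible $\klfail$-max entropy follows immediately, as their defining events compare identical quantities against $\klgap$.

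For Part 2, I would take expectations over $(r, y) \sim (\tR_{\leq m+1}, \tY_{\leq m+1})$ in the identity of Part 1. The right-hand side averages by definition to $\sum_{i=1}^{m+1} \kl{\tY_i \mid \tR_{<i}, \tY_{<i}}{Y_i \mid R_{<i}, Y_{<i}}$. On the left, the negative terms give $\sum_{i=1}^{m+1} \ex[(r_{<i}, y_i) \from (\tR_{<i}, \tY_i)]{\sent{y_i, r_{<i}}{\tY_i \mid \tR_{<i}}} = \sum_{i=1}^{m+1} \ent{\tY_i \mid \tR_{<i}}$ by definition of conditional entropy. For the positive terms, the chain rule for sample entropy collapses $\sum_{i=1}^{m+1} \sent{y_i, y_{<i}}{Y_i \mid Y_{<i}}$ into $\sent{y}{Y}$ evaluated at $y = \tG(r)$, and then
\[
\ex[y \from \tY]{\sent{y}{Y}} = \ex[y \from \tY]{\log\frac{1}{\pr{Y = y}}} = \log|\Supp(Y)| = \ent{Y} = \sum_{i=1}^{m+1} \ent{Y_i \mid Y_{<i}},
\]
using flatness of $(Y_1, \dots, Y_{m+1})$ together with $\Supp(\tY_{\leq m+1}) \subseteq \Supp(Y_{\leq m+1})$ for the second equality and the chain rule for entropy at the end.

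The main (minor) obstacle is conceptual rather than computational: recognizing that flatness is used \emph{only} in the single step that equates $\ex[y \from \tY]{\sent{y}{Y}}$ with $\ent{Y}$, which is what bridges expectations taken under $\tY$ and those taken under $Y$. Without flatness, an adversarial $\tG$ consistent with $Y$ can concentrate on outcomes whose sample entropy under $Y$ differs from the average $\ent{Y}$, which is precisely the defect of the original inaccessible entropy definition that motivates the KL-based reformulation in the first place.
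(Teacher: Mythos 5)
Your proposal is correct and follows essentially the same route as the paper: establish the identity at the level of sample quantities (the term-by-term simplification using that $R_{<i}$ is a dummy independent of $Y$ and that $\tY_{<i}$ is determined by $\tR_{<i}$ on the support is exactly what the paper leaves implicit), then take expectations over $(\tR,\tY)$, with flatness entering only to equate $\ex[y\from\tY]{\sent{y}{Y}}=\log\lvert\Supp(Y)\rvert=\ent{Y}$, which is precisely how the paper uses it. Your explicit remark that Part 1 needs no flatness is a fair refinement of the paper's presentation but not a different argument.
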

\begin{proof}
	For the sample notions, the chain rule (Proposition~\ref{prop:kl-chain})
	gives:
	\begin{displaymath}
		\sum_{i=1}^{m+1}\sent{y_i, y_{<i}}{Y_i|Y_{<i}}
		= \sent{y}{Y_{\leq m+1}} = \log|\Supp(Y_{\leq m+1})|
	\end{displaymath}
	for all $y$ since $Y$ is flat.
	Hence:
	\begin{align*}
		\log\lvert\Supp(Y_{\leq m+1})\rvert - \sum_{i=1}^{m+1} \sent{y_i, y_{<i}}{\tY_i|\tR_{<i}}
		&=\sum_{i=1}^{m+1} \left( \sent{y_i, y_{<i}}{Y_i|Y_{<i}} - \sent{y_i, r_{<i}}{\tY_i|\tR_{<i}}\right)\\
		&= \sum_{i=1}^{m+1} \skl{y_i, y_{<i}, r_{<i}}{\tY_i|\tR_{<i}, \tY_{<i}}{Y_i|R_{<i}, Y_{<i}}
		\;,
	\end{align*}
	so the second claim follows by taking the expectation over $(\tY_{\leq m+1}, \tR_{\leq m+1})$ on both sides.
\end{proof}

By chaining the reductions between the different notions of hardness considered
in this work (hardness in relative entropy, next-block inaccessible relative
entropy and inaccessible entropy), we obtain a more modular proof of the
theorem of Haitner~\etal~\cite{HRVW16}, obtaining inaccessible entropy from any
one-way function.

\begin{theorem}
	Let $n$ be a security parameter, $f:\zo^n\to\zo^n$ be a $(t, \eps)$-one-way
	function, and $X$ be uniform over $\zo^n$. For $\ell\in\{1,\dots,n\}$,
	decompose $f(X) \eqdef (Y_1,\dots, Y_{n/\ell})$ into blocks of length $\ell$. Then:
	\begin{enumerate}
		\item For every $0 \leq \klgap\leq\log(1/\eps)$, $(Y_1, \dots,
			Y_{n/\ell}, X)$ has inaccessible entropy
			$\left(t', \log(1/\eps)-\klgap\right)$ for $t'
			= \Omega\left(t\cdot\klgap\cdot\ell^2/(n^2\cdot 2^\ell)\right)$.
		\item For every $0 < \klfail \leq 1$ and $0 \leq \klgap\leq \log(1/\eps) - \log(2/\klfail)$,
			$(Y_1, \dots, Y_{n/\ell}, X)$ has inaccessible $\klfail$-max entropy
			$\left(t',\log(1/\eps)-\log(2/\klfail)-\klgap\right)$ for
		$t' =\allowbreak\Omega\left(t\cdot\klfail\cdot\klgap\cdot \ell^2/(n^2\cdot 2^\ell)\right)$.
	\end{enumerate}
\end{theorem}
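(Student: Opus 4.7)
The plan is to chain together the reductions already established in this paper so as to convert the one-wayness of $f$ into inaccessible entropy through three intermediate notions, applying one earlier theorem per step while tracking parameters carefully.

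First, following Example~\ref{ex:invf}, I set $\Pi \eqdef \Pi^f = \{(f(x),x) : x\in\zo^n\}$, $Y \eqdef f(X)$, and $W \eqdef X$; by one-wayness of $f$, the distributional search problem $(\Pi, Y)$ is $(t,\eps)$-hard. I then apply Theorem~\ref{thm:witness-kl-hard} to the triple $(\Pi^f, f(X), X)$ to upgrade average-case hardness to witness hardness in relative entropy. For part~1 of the statement to be proved, this directly yields witness hardness $(\Omega(t), \log(1/\eps))$ in relative entropy. For part~2, I instead invoke Theorem~\ref{thm:witness-kl-hard} with the failure parameter $\klfail/2$ in place of $\klfail$, obtaining witness hardness $(\Omega(t), \log(1/\eps) - \log(2/\klfail))$ in $(\klfail/2)$-min relative entropy; the factor of two here leaves headroom for additional failure probability in the next step.

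Next, I apply Theorem~\ref{thm:bkl-hard} with blocks of length $\ell$ to convert witness hardness into next-block inaccessible relative entropy. Since $m = n/\ell$, the factor $1/(m^2 2^\ell)$ appearing in the time bound becomes exactly $\ell^2/(n^2 \cdot 2^\ell)$, matching the statement. For part~1, I set the gap-loss parameter $\klgap'$ (in the notation of Theorem~\ref{thm:bkl-hard}) equal to the target $\klgap$, obtaining next-block inaccessible relative entropy of gap $\log(1/\eps) - \klgap$. For part~2, I take $\klgap' = \klgap$ and $\klfail' = \klfail/2$, so the resulting failure probability is $\klfail/2 + \klfail/2 = \klfail$ and the gap is $\log(1/\eps) - \log(2/\klfail) - \klgap$, in both cases producing the claimed running time.

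Finally, Theorem~\ref{thm:nb-inacc} transfers next-block inaccessible (min-)relative entropy into (min-)inaccessible entropy without any further loss in parameters. Its hypothesis is that the joint random variable $(Y_1, \dots, Y_{n/\ell}, X)$ is flat, which holds here for free because the map $x \mapsto (f(x), x)$ is injective and $X$ is uniform, so the joint distribution is uniform on a support of size $2^n$, even though $f(X)$ by itself need not be flat. The only real obstacle in the whole argument is parameter bookkeeping: in particular, allocating the failure probability $\klfail$ evenly between Theorems~\ref{thm:witness-kl-hard} and~\ref{thm:bkl-hard} to produce the $\log(2/\klfail)$ term, and recognizing that the flatness required by Theorem~\ref{thm:nb-inacc} applies to the joint distribution $(f(X), X)$ rather than to $f(X)$ alone. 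Everything else reduces to direct substitution into the preceding theorems.
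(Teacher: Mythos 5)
Your proposal is correct and matches the paper's own proof essentially step for step: both chain Theorem~\ref{thm:witness-kl-hard} (invoked with failure parameter $\klfail/2$), Theorem~\ref{thm:bkl-hard} with $m=n/\ell$, $\klgap'=\klgap$, $\klfail'=\klfail/2$, and finally Theorem~\ref{thm:nb-inacc}, using that $(f(X),X)$ is flat even when $f(X)$ alone is not. The parameter bookkeeping (the $\log(2/\klfail)$ term and the $\ell^2/(n^2\cdot 2^\ell)$ time factor) is exactly as in the paper, so there is nothing to correct.
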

\begin{proof}
	Since $f$ is $(t, \eps)$-one-way, the distributional search problem
	$\big(\Pi^f, f(X)\big)$ where $\Pi^f = \{(f(x), x):x\in\zo^n\}$ is
	$(t, \eps)$-hard. Clearly, $(f(X), X)$ is supported on $\Pi^f$, so by
	applying Theorem~\ref{thm:witness-kl-hard}, we have that $(\Pi^f, f(X), X)$
	has witness hardness $(\Omega(t), \log(1/\eps))$ in relative entropy and $(\Omega(t),
	\log(1/\eps) - \log(2/\klfail))$ in $\klfail/2$-min
	relative entropy.
	Thus, by Theorem~\ref{thm:bkl-hard} we have that $(Y_1, \dotsc, Y_{n/\ell},
	X)$ has next-block inaccessible relative entropy $\left(\Omega\left(t\cdot
			\klgap\cdot \ell^2/(n^2\cdot 2^\ell)\right), \log(1/\eps)
			- \klgap\right)$ and next-block inaccessible $\klfail$-min relative
			entropy $\left(\Omega\left(t\cdot \klfail\cdot \klgap\cdot
			\ell^2/(n^2\cdot 2^\ell)\right), \log(1/\eps) - \log(2/\klfail)
		- \klgap\right)$, and we conclude by Theorem~\ref{thm:nb-inacc}.
\end{proof}

\begin{remark}
	For comparison, the original proof of \cite{HRVW16} shows that for every $0
	< \klfail \leq 1$, $(Y_1, \dots, Y_{n/\ell}, X)$ has inaccessible
	$\klfail$-max entropy $\left(t',\log(1/\eps) - 2\log(1/\klfail)
		- O(1)\right)$ for $t' = \tilde\Omega \left(t\cdot \klfail\cdot
	\ell^2/(n^2\cdot 2^\ell)\right)$, which in particular for fixed $t'$ has
	quadratically worse dependence on $\klfail$ in terms of the achieved
	inaccessible entropy: $\log(1/\eps) - 2\cdot \log(1/\klfail) - O(1)$ rather
	than our $\log(1/\eps) - 1\cdot \log(1/\klfail) - O(1)$.
\end{remark}

\begin{corollary}[Theorem~4.2 in~\cite{HRVW16}]
	Let $n$ be a security parameter, $f:\zo^n\allowbreak \to\zo^n$ be a strong
	one-way function, and $X$ be uniform over $\zo^n$. Then for every $\ell
	= O(\log n)$, $(f(X)_{1\dotsc \ell},\dots,f(X)_{n-\ell + 1 \dotsc n}, X)$
	has inaccessible entropy $\left(n^{\omega(1)}, \omega(\log n)\right)$ and
	inaccessible $1/n^{\omega(1)}$-max entropy $\left(n^{\omega(1)},
	\omega(\log n)\right)$.
\end{corollary}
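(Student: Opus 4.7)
The plan is to derive this corollary by direct parameter substitution into the preceding theorem, after first converting strong one-wayness into a single quantitative hardness guarantee. Throughout, I also note that $2^\ell = \poly(n)$ since $\ell = O(\log n)$, so the $2^\ell$ factor in the time bound only costs us a polynomial.

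First I would extract from strong one-wayness, via a standard diagonalization, functions $t(n) = n^{\omega(1)}$ and $\eps(n) = 1/n^{\omega(1)}$ such that $f$ is $(t,\eps)$-one-way for all sufficiently large $n$. This is possible because strong one-wayness gives $(n^c,1/n^c)$-one-wayness for every constant $c \in \bbN$, and a ``slow diagonalization'' over $c$ yields a slowly growing super-polynomial $t$ and a slowly decaying negligible $\eps$. For the first (expected-entropy) claim, I then apply part~(1) of the preceding theorem to this $(t,\eps)$ with $\klgap \eqdef \tfrac{1}{2}\log(1/\eps)$. The resulting inaccessible entropy gap becomes $\log(1/\eps)-\klgap = \tfrac{1}{2}\log(1/\eps) = \omega(\log n)$, while the time bound
\[
  t' \;=\; \Omega\bigl(t \cdot \klgap \cdot \ell^2 / (n^2 \cdot 2^\ell)\bigr)
\]
is $t$ times a super-logarithmic numerator divided by $\poly(n)$, and hence still $n^{\omega(1)}$.

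For the $\klfail$-max entropy claim, I would further pick $\klfail(n) = 1/n^{\omega(1)}$ decaying slowly enough that both $\log(1/\eps)-\log(1/\klfail) = \omega(\log n)$ and $t \cdot \klfail = n^{\omega(1)}$ hold; for instance, with $t = 1/\eps = 2^{\log^2 n}$, taking $\klfail = 2^{-\log n \cdot \log\log n}$ works. Setting $\klgap \eqdef \tfrac{1}{2}\bigl[\log(1/\eps) - \log(2/\klfail)\bigr]$ and invoking part~(2) of the preceding theorem then yields inaccessible $\klfail$-max entropy with gap $\omega(\log n)$ and time bound $\Omega(t \cdot \klfail \cdot \klgap \cdot \ell^2/(n^2 \cdot 2^\ell)) = n^{\omega(1)}$, as required.

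The only nontrivial step in this plan is the initial parameter balancing: one must choose $t,\eps,\klfail$ simultaneously so that $t$ is super-polynomial, $\eps$ and $\klfail$ are both negligible, the product $t \cdot \klfail$ is \emph{still} super-polynomial, and the difference $\log(1/\eps) - \log(1/\klfail)$ is \emph{still} super-logarithmic. All four conditions are mutually compatible by the diagonalization above (one simply arranges for $\eps$ to vanish much faster than $\klfail$, and for $t$ to grow much faster than $1/\klfail$). Once that balance is fixed, the corollary follows by direct substitution, absorbing the $2^\ell = \poly(n)$ factor into the polynomial denominator of the time bound.
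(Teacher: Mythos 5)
Your proposal is correct and follows essentially the same route as the paper, which states this corollary as an immediate instantiation of the preceding theorem: you simply make explicit the standard step of turning strong one-wayness into a single $(t,\eps)$ pair with $t=n^{\omega(1)}$, $\eps=1/n^{\omega(1)}$, and then substitute with $\klgap$ (and $\klfail$) balanced so that the gap stays $\omega(\log n)$ and the time bound stays superpolynomial after dividing by the $\poly(n)$ factor $n^2 2^\ell/\ell^2$. The parameter balancing you flag is indeed the only content, and your choices (e.g.\ $\klgap=\tfrac12\log(1/\eps)$ and $\log(1/\klfail)$ of order $\tfrac12\min\{\log t,\log(1/\eps)\}$) satisfy all the hypotheses of the theorem, so the argument goes through.
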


\section*{Acknowledgements}

We thank Muthuramakrishnan Venkitasubramaniam for an inspiring conversation
which sparked this work.

\bibliographystyle{alpha}
\bibliography{refs,pseudorandomness,crypto}

\appendix
\section{Missing Proofs}\label{app:proof}
\begin{lemma}\label{claim:convex}
	For all $t \geq 1$, $f:x\mapsto\frac{x}{1-(1-x)^t}$ is convex over $[0,1]$.
\end{lemma}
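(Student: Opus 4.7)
The plan is to prove convexity by computing $f''$ directly and showing $f''(x) \geq 0$ on $(0,1)$; convexity on the closed interval $[0,1]$ then follows by continuity, defining $f(0) = 1/t$ as the limiting value.

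Concretely, writing $D(x) = 1-(1-x)^t$ so that $f = x/D$, I would first compute $f'(x) = N(x)/D(x)^2$ where $N(x) = 1 - (1-x)^{t-1}\bigl(1+(t-1)x\bigr)$. A short differentiation yields the fortunate identity $N'(x) = t(t-1)\,x\,(1-x)^{t-2}$. From the quotient rule, $f''(x) = \bigl(N'(x)D(x) - 2N(x)D'(x)\bigr)/D(x)^3$, and since $D > 0$ on $(0,1]$, nonnegativity of $f''$ is equivalent to nonnegativity of the numerator.

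Next, substituting $u = 1-x$ and expanding, I would show that the numerator factors as
\[ N'(x)D(x) - 2N(x)D'(x) \;=\; t\,u^{t-2}\,P(u), \]
where $P(u) = (t-1) - (t+1)u + (t+1)u^t - (t-1)u^{t+1}$. Since $t u^{t-2}\geq 0$ on $(0,1]$ for $t\geq 1$, the problem reduces to showing $P(u) \geq 0$ on $[0,1]$. Note that $P(1) = (t-1) - (t+1) + (t+1) - (t-1) = 0$, so it suffices to establish that $P$ is nonincreasing on $[0,1]$. I compute $P'(u) = (t+1)\,Q(u)$ with $Q(u) = -1 + t u^{t-1} - (t-1)u^t$, and then $Q'(u) = t(t-1)\,u^{t-2}(1-u) \geq 0$ on $[0,1]$. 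Thus $Q$ is nondecreasing from $Q(0) = -1$ to $Q(1) = 0$, so $Q \leq 0$ on $[0,1]$, which gives $P' \leq 0$ and hence $P \geq 0$, as required.

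The main obstacle is the algebraic bookkeeping needed to reach the clean factored form $tu^{t-2}P(u)$ from $N'D - 2ND'$; once that identity is in hand, the remainder is an elementary monotonicity argument applied twice (first to deduce $Q\leq 0$ from the sign of $Q'$, then to deduce $P\geq 0$ from the sign of $P'$ together with the boundary value $P(1)=0$).
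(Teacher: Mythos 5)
Your proof is correct and is essentially the paper's argument in different clothing: after the substitution $u=1-x$ your polynomial $P(u)$ is exactly the paper's $g(x)=t(1-x)(x^t+1)-(1+x)(1-x^t)$ in expanded form, and your two-step monotonicity argument (sign of $Q'=P''/(t+1)$, boundary value at $1$, then sign of $P'$, boundary value $P(1)=0$) is identical to the paper's chain $g''\geq 0$, $g'(1)=0$, $g(1)=0$. The only cosmetic difference is that the paper changes variables to $\tilde f(x)=f(1-x)$ before differentiating, whereas you differentiate $f$ first and substitute afterwards.
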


\begin{proof}
	We instead show convexity of $\tilde{f}:x\mapsto f(1-x)$. A straightforward
	computation gives:
	\begin{displaymath}
		\tilde{f}''(x) = \frac{x^{t-2}t\big(t(1-x)(x^t+1)
		- (1+x)(1-x^t)\big)}{(1-x^t)^3}
	\end{displaymath}
	so that it suffices to show the non-negativity of $g(x) = t(1-x)(x^t+1)
	- (1+x)(1-x^t)$ over $[0,1]$. The function $g$ has second derivative
	$t(1-x)(t^2-1)x^{t-2}$, which is non-negative when $x\in[0, 1]$, and thus
	the first derivative $g'$ is non-decreasing. Also, the first derivative at
	$1$ is equal to zero, so that $g'$ is non-positive over $[0, 1]$ and hence
	$g$ is non-increasing over this interval. Since $g(1) = 0$, this implies
	that $g$ is non-negative over $[0,1]$ and $f$ is convex as desired.
\end{proof}

\begin{theorem}[Theorem~\ref{thm:witness-kl-hard} restated]
	Let $\Pi$ be a binary relation  and let $(Y, W)$ be pair of random
	variables supported on $\Pi$.  If $(\Pi, Y)$ is $(t, \eps)$-hard, then
	$(\Pi, Y, W)$ is $(t', \Delta')$ witness hard in relative entropy and $(t',
	\klgap'')$ witness hard in $\klfail$-min relative entropy for every
	$\klfail\in[0,1]$ where $t' = \Omega(t)$, $\klgap' = \log(1/\eps)$ and
	$\klgap'' = \log(\klfail/\invp)$.
\end{theorem}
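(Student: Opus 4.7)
The plan is to adapt the proof of Lemma~\ref{lem:kl-hard} by rerunning the reduction on the three-variable joint distributions $(\tR, \tG_1(\tR), \tG_\w(\tR))$ and $(\sS(Y), Y, W)$, so that the witness block is tracked throughout. For the mean-divergence claim alone, one could in principle invoke Theorem~\ref{thm:kl-hard} together with the data-processing inequality (Proposition~\ref{lemma:dec}) to drop the witness block; but, as highlighted in the remark preceding the theorem, data processing fails for $\KLM$, so I handle both claims uniformly by reworking the reduction directly on the three-variable distributions.

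The reduction is the same as in Lemma~\ref{lem:kl-hard}: the linear-time oracle algorithm $\sA^{\tG_\w, \sS}(y) \eqdef \tG_\w(\sS(y))$, which runs in time $\Omega(t)$ whenever $\tG_\w$ and $\sS$ do. The key observation is that whenever $\sS$ outputs some $r \in \Supp(\tR)$ with $\tG_1(r) = Y$, the output $\tG_\w(r)$ is automatically a valid witness for $Y$ because $\tG$ is supported on $\Pi$; strengthening this event by additionally requiring $\tG_\w(r) = W$ only shrinks its probability, so that
\begin{align*}
	\pr{\Pi\big(Y, \sA^{\tG_\w, \sS}(Y)\big)}
	&\geq \sum_{r \in \Supp(\tR)} \pr{\sS(Y) = r \wedge Y = \tG_1(r) \wedge W = \tG_\w(r)}\\
	&= \ex[(r,y,w)\from(\tR,\tG_1(\tR),\tG_\w(\tR))]{2^{-\skl{r,y,w}{\tR,\tG_1(\tR),\tG_\w(\tR)}{\sS(Y),Y,W}}}\,,
\end{align*}
where the last equality uses that $(\tR,\tG_1(\tR),\tG_\w(\tR))$ is concentrated on triples of the form $(r,\tG_1(r),\tG_\w(r))$ with mass $\pr{\tR=r}$, because $\tG$ is a deterministic function of $\tR$.

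From this identity, both claims follow exactly as in Lemma~\ref{lem:kl-hard}. Jensen's inequality applied to the convex map $x\mapsto 2^{-x}$ yields $\pr{\Pi(Y,\sA(Y))} \geq 2^{-\klgap'} = \eps$ under the hypothesis $\kl{\tR,\tG_1(\tR),\tG_\w(\tR)}{\sS(Y),Y,W} \leq \klgap' = \log(1/\eps)$; and restricting the expectation to the event (of probability at least $\klfail$) that the sample relative entropy is at most $\klgap''$ yields $\pr{\Pi(Y,\sA(Y))} \geq \klfail\cdot 2^{-\klgap''} = \eps$ under $\klgap'' = \log(\klfail/\eps)$. In either case the resulting success probability contradicts $(t,\eps)$-hardness of $(\Pi, Y)$ for the reduction of running time $t'=\Omega(t)$.

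I do not expect any real obstacle here, since the proof is structurally the same as Lemma~\ref{lem:kl-hard}. The only care needed is verifying that adding the conjunct $W = \tG_\w(r)$ to the success-lower-bounding event preserves validity while allowing the resulting probability ratio to be re-expressed as $2^{-\skl{\cdot}{\cdot}}$ against the three-variable reference distribution, which is precisely what the determinism of $\tG$ given $\tR$ buys us and what makes the argument go through without a detour via data processing.
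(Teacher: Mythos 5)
Your proof is correct and follows essentially the same route as the paper's: the same reduction $\sA^{\tG_\w,\sS}(y)=\tG_\w(\sS(y))$, the same strengthening of the success event to $\{\sS(Y)=r,\,Y=\tG_1(r),\,W=\tG_\w(r)\}$ so that its probability can be rewritten as $\ex\bigl[2^{-\skl{r,y,w}{\tR,\tG_1(\tR),\tG_\w(\tR)}{\sS(Y),Y,W}}\bigr]$ using the determinism of $\tG$ given $\tR$, followed by Jensen for the mean notion and Markov (restricting to the probability-$\klfail$ event) for the $\klfail$-min notion. Your observation that one must redo the argument on the three-variable distributions rather than invoke data processing matches the paper's own remark about $\KLM$.
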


\begin{proof}
	We proceed similarly to the proof of Theorem~\ref{thm:kl-hard}.
	Let $(\tG, \sS)$ be a pair of algorithms with $\tG = (\tG_1, \tG_\w)$ a two-block generator supported on $\Pi$.
	Define the linear-time oracle algorithm $\sA^{\tG_\w, \sS}(y) \eqdef \tG_\w(\sS(y))$.
	Then
	\begin{align*}
		\pr{\Pi\big(Y, \sA^{\tG_\w, \sS}(Y)\big)}
		&= \pr{\Pi(Y, \tG_\w(\sS(Y)))}\\
		&\geq \pr{\tG_1(\sS(Y)) = Y}\tag{$\tG$ is supported on $\Pi$}\\
		&= \sum_{r\in\Supp(\tR)}
		\pr{\sS(Y)=r\wedge Y=\tG_1(r)}\\
		&\geq \sum_{\substack{r\in\Supp(\tR)\\w\in\Supp(\tG_2(\tR))}}
		\pr{\sS(Y)=r\wedge Y=\tG_1(r)\wedge W=w}\\
		&= \ex[\substack{r\from\tR\\ w\from\tG_2(r)}]{\frac{\pr{\sS(Y)=r \wedge
		Y = \tG_1(r) \wedge W = w}}{\pr{\tR=r \wedge \tG_2(r) = w}}}\\
		&= \ex[\substack{r\from\tR\\ y\from \tG_1(r)\\ w\from\tG_2(r)}]
		{2^{-\skl{r, y, w}{\tR, \tG_1(\tR), \tG_2(\tR)}{\sS(Y), Y, W}}},
	\end{align*}
	The witness hardness in relative entropy then follows by applying Jensen's
	inequality (since $2^{-x}$ is convex) and the witness hardness in
	$\klfail$-min relative entropy follows by Markov's inequality by
	considering the event that the sample relative entropy is smaller than
	$\klgap$ (this event has density at least $\klfail$).
\end{proof}

\end{document}